\documentclass{article}

\usepackage{arxiv}

\usepackage[utf8]{inputenc} 
\usepackage[T1]{fontenc}    
\usepackage{newunicodechar} 
\newunicodechar{ğ}{\u{g}}
\usepackage{url}            
\usepackage{booktabs}       
\usepackage{amsfonts}       
\usepackage{amsthm}         
\usepackage{nicefrac}       
\usepackage{microtype}      
\usepackage{makecell}      
\usepackage{multirow}
\usepackage{lipsum}
\usepackage{graphicx}
\usepackage[style=authoryear,maxnames=10,maxcitenames=2,giveninits=true,backend=biber]{biblatex}
\usepackage{amsmath}
\usepackage{caption}
\usepackage{subcaption}
\usepackage[ruled,linesnumbered]{algorithm2e}
\usepackage{setspace}
\usepackage{tikz}
\usepackage{tcolorbox}
\usepackage{amsthm}
\usepackage{amsmath}
\usepackage{amssymb}
\usepackage{booktabs}
\usepackage{multirow}
\usepackage{float}
\usepackage{subcaption}
\usepackage{array}
\usepackage{rotating}
\usepackage{booktabs}
\usepackage{etoolbox}
\usepackage{pdflscape}
\usepackage{graphicx}
\usepackage{caption}
\usepackage{makecell}
\usepackage{rotating}
\usepackage{mathrsfs}
\usepackage[export]{adjustbox}
\usepackage{todonotes}
\usepackage{hyperref}       
\usepackage{standalone}

\usepackage{tikz}
\usetikzlibrary{calc, arrows.meta}

\newcommand{\A}{\Lambda}

\newcommand{\algname}[1]{\textsc{#1}}
\newcommand{\const}[1]{\mathsf{#1}}

\providecommand{\floor}[1]{\ensuremath{\left\lfloor #1\right\rfloor}}

\def\theoremcounterhook{section}

\newtheorem{theorem}{Theorem}[section]
\newtheorem{definition}{Definition}
\newtheorem{proposition}{Proposition}

\title{Wildfire Suppression: Complexity, Models, and Instances}

\author{
 Gustavo Delazeri \\
  Institute of Informatics\\
  Universidade Federal do Rio Grande do Sul\\
  Porto Alegre, Brazil \\
  \texttt{gustavo.delazeri@inf.ufrgs.br} \\
   \And
 Marcus Ritt \\
  Institute of Informatics\\
  Universidade Federal do Rio Grande do Sul\\
  Porto Alegre, Brazil \\
  \texttt{marcus.ritt@inf.ufrgs.br} \\
}

\begin{document}
\maketitle

\begin{abstract}
  Wildfires cause major losses worldwide, and the frequency of fire-weather conditions is likely to increase in many regions. We study the allocation of suppression resources over time on a graph-based representation of a landscape to slow down fire propagation. Our contributions are theoretical and methodological. First, we prove strong NP-completeness on planar graphs for this problem and two related variants, and on full weighted directed grids for two of the three problems. We also show that this problem remains strongly NP-complete when all resources are released simultaneously. Second, we propose a new mixed-integer programming (MIP) formulation that obtains state-of-the-art results, showing that MIP is a competitive approach contrary to earlier findings. Third, showing that existing benchmarks lack realism and difficulty, we introduce a physics-grounded instance generator based on Rothermel's surface fire spread model. We use these diverse instances to benchmark the literature, identifying the specific conditions where each algorithm succeeds or fails.
\end{abstract}

\section{Introduction}

Wildfires cause major damage globally, as reflected in insured losses: in January 2025, fires in Los Angeles County alone resulted in an estimated USD 40 billion in damages, while Australia's 2019--20 bushfires totaled AUD 1.866 billion~\parencite{SwissRe/2025, PERILS/2021}. In addition to these damages, the cost of wildfire suppression is also increasing. In the United States, firefighting costs rose from about USD 240 million in 1985 to USD 3.17 billion in 2023~\parencite{NIFCSuppressionCosts}. The frequency and intensity of extreme wildfire events have doubled over the last two decades, and the occurrence of fire-prone weather has become more likely in several regions worldwide, with further increases projected in other regions~\parencite{IPCC_2022_WGII,Cunningham/2024}.

Research into wildfire prevention and combat dates back to the 1960s~\parencite{Martell/1982}. \textcite{Jewell/1963} was among the first to model the economic trade-offs of wildfire management. He represented the fire as a linear front spreading with constant velocity along a single axis, which had to be suppressed by a firebreak perpendicular to the spread. The goal is to minimize the total cost of wages and equipment for firefighters and the value of land lost. The resulting model is simple enough to be solved analytically.

Diverse mathematical models address decisions in wildfire prevention and suppression. \textcite{Mendes/2022} categorize these based on the type of decision they make. For example, vehicle routing models determine the optimal dispatch and timing of resources to assets (e.g., water tankers to factories) to minimize fire damage~\parencite{Merwe/2015}. Covering models address decisions about strategically locating resources, such as vehicles or stations, to maximize the potential for rapid initial attack across fire-prone areas~\parencite{Dimopoulou/2001}.

Graph-based (or grid-based) models represent wildfire spread by discretizing the landscape into a grid of basic units or cells. This is encoded as a graph, where each cell is a vertex and arcs connect vertices of adjacent cells. The arcs are labeled with the estimated fire travel time from the center of a cell to the center of an adjacent cell, which depends on local conditions such as fuel type, wind speed, and slope. Fire propagation across the landscape can then be approximated by calculating the shortest paths from an ignition point to other cells. If a fire suppression resource is deployed to a cell, fire spread through that cell is delayed, similar to the effect of a firebreak. Integrating fire spread and suppression actions into a single optimization model allows us to answer questions such as where resources should be deployed to minimize the total area burned within a given time horizon, or if there are enough resources to protect a critical asset in the event of a fire.

We focus on the Wildfire Suppression Problem (WSP), a graph-based model first introduced by \textcite{Alvelos/2018}. After formally defining the WSP in Section~\ref{sec:problem_definition}, we provide a literature review of related problems, benchmark instances and algorithms in Section~\ref{sec:related_work}. In Section~\ref{sec:complexity}, we show that the WSP is strongly NP-complete even when all resources are released simultaneously, and also establish hardness on trees and full weighted directed grids. We further prove planar hardness for two related problems and grid hardness for one of them.

The only MIP formulation for WSP \parencite{Alvelos/2018}, has previously been found uncompetitive with tailored algorithms. In Section~\ref{sec:mip_formulation}, we propose a new MIP formulation that obtains state-of-the-art results. However, since existing benchmarks are small, easy to solve, and lack physical realism, Section~\ref{sec:instance_generator} introduces a new instance generator based on Rothermel's model of surface fire spread model~\parencite{Rothermel/1972} (summarized in Section~\ref{sec:rothermel}). While tailored to the problem we consider, the generator's principles are applicable to similar problems.

Finally, we use the generator in Section~\ref{sec:experiments} to evaluate all existing WSP algorithms. By varying instance characteristics, we identify scenarios where current state-of-the art algorithms struggle. We conclude and discuss future work in Section~\ref{sec:conclusions}.

\section{Problem Definition}\label{sec:problem_definition}

Let $G=(V,A)$ be a directed graph where non-negative arc weights $t_{uv}$, $uv\in A$, model travel times required for fire to propagate between vertices. For an ignition vertex $s \in V$, the travel times define a shortest-path tree rooted at $s$ in which each vertex $v \in V$ has a fire arrival time $a_v$. Now assume we have a set $R = [k]$ of suppression resources that can be allocated to a vertex $v \in V$, adding a delay $\Delta$ to its outgoing arcs. (For positive $n$, we write $[n] = \{1, \dots, n\}$.) Resource $i \in R$ is released at time $t_i$ and can only be allocated to a vertex $v$ if $a_v \geq t_i$, i.e., if $v$ is not burned yet. Assuming that each vertex can receive at most one resource we are interested in minimizing the number of vertices that burn before time horizon $H$.

An allocation of resources to vertices can be represented by an injective function $\A: R \rightarrow V$, where the image $P^{\A} = \mathrm{im}(\A)$ denotes the set of \emph{protected} vertices. An allocation can be partial; if $|P^{\A}| = |R|$ it is \emph{complete}. Let $\A_0$ denote the empty allocation that does not protect any vertex ($P^{\A_0} = \emptyset$). By definition, an allocation changes the travel times on the arcs, but it can also change the topology and the arrival times of the shortest-path tree. For allocation $\A$ let $t^{\A}$ denote the resulting fire propagation times, and $a^{\A}$ the fire arrival times. Let $B^{\A}_t = \{v \in V \mid a^{\A}_{v} < t\}$ be the set of vertices that are burned at time $t$. The problem is to find a feasible allocation $\A$ that minimizes the number of burned vertices at time $H$, $|B^{\A}_H|$.

Figure~\ref{fig:graph_models:problem_example} shows an example WSP instance on a nine-vertex graph with ignition vertex $s$. We have $k=3$ resources with release times $t_1 = 2$, $t_2 = 3$, and $t_3 = 4$. The right-hand side shows the allocation $\A=\{1\mapsto v_2, 2\mapsto v_4, 3\mapsto v_6\}$, protecting vertices $P^{\A} = \{v_2, v_4, v_6\}$. Note that vertices $v_1$ and $v_3$ cannot be protected as they burn at time $1$, before the first resource is released at time $2$. The allocation adds a delay $\Delta = 2$ to the outgoing arcs of $v_2$, $v_4$, and $v_6$, increasing arrival times at $v_5$, $v_7$, and $v_8$. For a time horizon $H = 5$ the objective value is $6$ since vertices $B^{\A}_{5} = \{s, v_1, v_2, v_3, v_4, v_6\}$ burn before time $5$.

\begin{figure}[]
  \centering
  \includegraphics[scale=0.8]{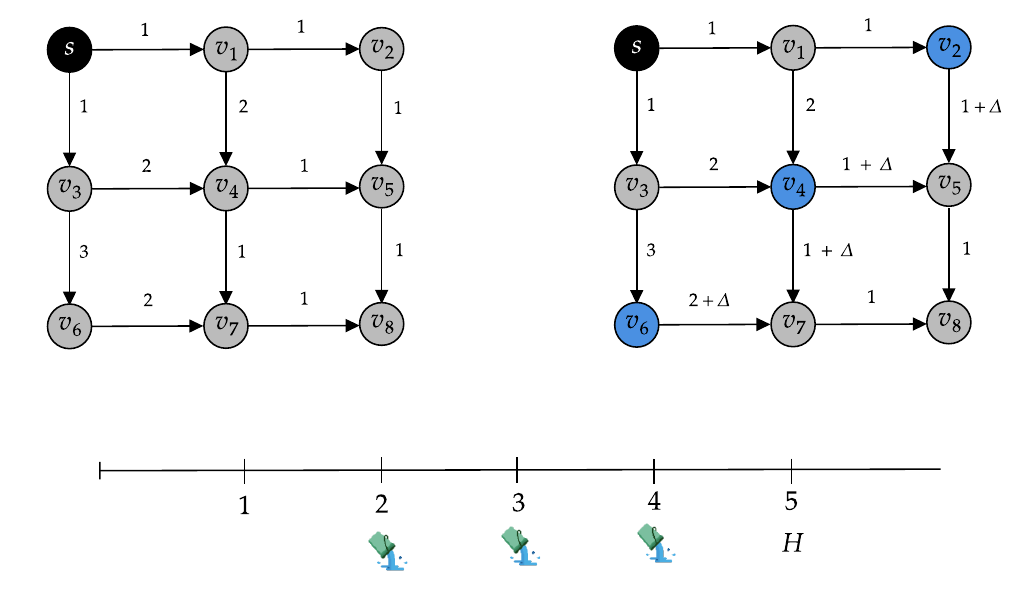}
  \caption[WSP instance example.]{WSP instance example with ignition vertex $s$ and $k=3$ resources released at times $t_1 = 2$, $t_2 = 3$, and $t_3 = 4$. With delay $\Delta=2$ and time horizon $H=5$ the right-hand side shows the optimal allocation $\A=\{1\mapsto v_2, 2\mapsto v_4, 3\mapsto v_6\}$.} \label{fig:graph_models:problem_example}
\end{figure}

\subsection{Notation} \label{sec:notation}

For any vertex $v \in V$, let $N^{+}_v$ and $N^{-}_v$ denote the set of vertices reachable via outgoing and incoming arcs, respectively. In Figure~\ref{fig:graph_models:problem_example}, we have $N^{+}_{v_4} = \{v_5, v_7\}$ and $N^{-}_{v_4} = \{v_1, v_3\}$. We represent times where resources become available by a sequence $t_1 \leq t_2 \leq \cdots \leq t_T$. Note that here $t_i$ denotes the $i$th time instant when one or more resources are released, rather than the release time of a specific resource $i \in R$. The context will make the meaning of $t_i$ clear. For each time $t \in [0, H]$, $R_t \subseteq R$ is the set of resources released at time $t$. In Figure~\ref{fig:graph_models:problem_example}, we have $t_1 = 2, t_2 = 3, t_3 = 4$, $R_1 = \emptyset$, $R_2 = \{1\}$, $R_3 = \{2\}$, and $R_4 = \{3\}$. For an arc $(u,v)\in A$ whose endpoints are denoted by single symbols we write $uv$. We further denote the shortest-path distance from $u$ to $v$ in $G$ as $d_{G}(u, v)$. For any subset $S \subseteq V$, $G \setminus S$ is the sub-graph induced by $V\setminus S$.

\section{Related Work}\label{sec:related_work}

Integrating spatial fire spread with optimization can be traced back to the linear programming model of \textcite{Hof/2000}. They represent the landscape as a grid of cells where fire propagation speed depends on the available fuel load. Suppression decisions determine the amount of fuel removed from each cell under a fixed suppression budget. Given an ignition cell, the objective is to maximize the fire arrival time to a target cell, representing an area to be protected from the fire. While pioneering, the model has several simplifying assumptions. First, it assumes that resources are available from the beginning, neglecting release times. Second, it assigns decision variables to fire entry and exit times of each cell, with their difference governed by a function of the fuel load. As we show in Appendix~\ref{app:related_work}, this implies uniform fire travel times for all outgoing arcs of a cell, preventing the model from capturing the directional effects of wind and slope.

\textcite{Wei/2011} extend the work of \textcite{Hof/2000} by allowing fire travel times of outgoing arcs depend on direction, and thus can model the influence of wind and slope. Similar to the WSP, their MIP formulation applies a uniform delay $\Delta$ to the outgoing arcs of a vertex upon resource allocation. They also consider a time horizon $H$ and are interested in the vertices that burn within this time horizon. However, different from the WSP, vertices $v \in V$ have a value $w_v$, and the objective is to minimize the total value of burned vertices.

\textcite{Wei/2011} also introduced safety constraints, by prohibiting the allocation of resources to cells where predicted flame lengths exceed a predefined threshold. However, like \textcite{Hof/2000} they assume that all resources are available from the outset. The authors address this limitation by proposing heuristic methods to distribute the suppression effort across multiple periods. We present the model of \textcite{Wei/2011} in our notation in Appendix~\ref{app:related_work}.

While these earlier models assume that resources are immediately available, \textcite{Alvelos/2018} was the first to add resource release times to a graph-based suppression model. This addition was part of a broader set of contributions. \textcite{Alvelos/2018} applied linear programming duality to the shortest-path problem to formulate MIP constraints for fire spread. This approach transforms fire arrival time calculations into a feasibility problem, ensuring that the resulting arrival times are accurate regardless of the objective function. By incorporating the resource release times, the author presents MIP models for four distinct wildfire suppression problems: (i) protecting priority areas (similar to \textcite{Hof/2000}), (ii) minimizing the total burned area within a time horizon (similar to \textcite{Wei/2011}), (iii) containing the fire by eliminating new ignitions, and (iv) containing the fire through covering the fire perimeter.

While previous research focused mainly on mathematical formulation and case studies, \textcite{Mendes/2022} is the first study dealing with algorithmic aspects of the WSP. They propose a set of instances, show that a commercial MIP solver can solve the formulation of \textcite{Alvelos/2018} only for small instances, and propose an Iterated Local Search (ILS) that obtains a better performance. The main components of the ILS are a perturbation procedure that tries to escape local minima by moving resources from protected to unprotected vertices, and a local search that promotes the formation of firebreak lines.

\textcite{Harris/2022} introduce the first exact algorithm for WSP, using logic-based Benders decomposition (LBBD). The algorithm uses the observation that, given a set of protected vertices, the problem reduces to a shortest-path calculation. Therefore, the WSP is split into a master problem, which decides time and location of resource deployment, and a subproblem that checks whether the decisions of the master problem are feasible. These problems are iteratively solved, and the master problem is updated with feasibility cuts after each iteration. The algorithm terminates once the optimal solution to the master problem is feasible for the original problem. They evaluate their algorithm on the instances from \textcite{Mendes/2022} consistently proving optimality within seconds. For further tests, they introduce $16$ new instances and compare the formulation of \textcite{Alvelos/2018}, the ILS of \textcite{Mendes/2022}, and LBBD. The results show that LBBD outperforms ILS, while the MIP formulation obtains the worst results, failing to find even feasible solutions for some instances.

\textcite{Delazeri/2024a} proposed an Iterative Beam Search (IBS) heuristic that explores a search tree of allocations, starting from the empty allocation $\A_0$. Tree levels correspond to release times $T$ and nodes to partial allocations. At each level, the algorithm expands partial allocations using the resources released at that time. The leaves represent complete allocations, and the best leaf is returned. The authors utilize two heuristic functions to rank partial allocations and prune the search tree, and a heuristic expansion rule that prioritizes vertices on the fire perimeter. Compared to LBBD, ILS, and the MIP formulation on the instances from \textcite{Harris/2022}, IBS consistently finds optimal solutions within seconds, while other algorithms need considerably more time.

\section{Computational Complexity} \label{sec:complexity}

The computational complexity of WSP has not been established in the literature. We show by a reduction from the Most Vital Nodes Problem (MVNP) that the WSP remains hard even when all resources are released simultaneously. This result is relevant for two reasons. First, it shows that relaxing the time constraints, a natural first approach for algorithm design, does not reduce the fundamental complexity. Second, it demonstrates that previous graph-based models which assume immediate resource availability are also NP-hard. Furthermore, the release schedule of the WSP lets us encode the \textsc{Firefighter} problem, yielding hardness even on trees and, by an embedding, on grids.

  The Weighted Wildfire Suppression Problem (WWSP) of \textcite{Wei/2011} and the Homogeneous Wildfire Suppression Problem (HWSP) of \textcite{Hof/2000} have no such temporal structure; for them we encode planar directed shortest-path edge interdiction. In summary, all three problems are strongly NP-complete on planar graphs. The WSP and WWSP remain so on full weighted directed grids, while the complexity of the HWSP on grids remains open.

Throughout this section, a \emph{full weighted directed grid} is a rectangular section of the integer lattice in which every pair of horizontally or vertically adjacent grid vertices is joined by an arc in each direction, with arbitrary positive rational travel times. We refer to the grid vertices as \emph{cells}. All problems considered below belong to NP: for a proposed allocation, its feasibility and all fire arrival times can be determined by a shortest-path computation.

\subsection{The WSP: temporal hardness}

In this section, we consider the decision version of the WSP that asks, given a bound $b_{\mathrm{WSP}}$, whether an allocation $\A$ exists such that the number of burned vertices $|B_H^{\A}|\leq b_{\mathrm{WSP}}$.

We first show that removing the scheduling choice does not make the problem easy: the WSP remains hard when every resource has the same release time.

\begin{definition}[Most Vital Nodes Problem, MVNP]
Given a directed graph $G=(V,A)$ with positive arc costs $t_{uv}$, distinct source and sink vertices $s,t\in V$, and positive integers $k$ and $h$, the MVNP asks whether there is a set $S\subseteq V\setminus\{s,t\}$ with $|S|\leq k$ such that $d_{G\setminus S}(s,t)\geq h$.
\end{definition}

The MVNP is strongly NP-complete, even when all arc costs are one~\parencite{BarNoy.etal/1995}.

\begin{theorem}\label{thm:wsp-simultaneous}
  The WSP is strongly NP-complete even when all resources are released simultaneously.
\end{theorem}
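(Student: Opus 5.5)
We reduce from the MVNP with unit arc costs, which is strongly NP-complete by \textcite{BarNoy.etal/1995}. The instance we build has a single release time. Protecting a vertex with a large enough delay will act like deleting it. The sink $t$ will be replaced by many copies, so that whether the sink is reached before the horizon dominates the burned-vertex count. Membership in NP was already noted at the start of this section, so only hardness needs an argument.

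\textbf{Construction.} Let $(G,s,t,k,h)$ be an MVNP instance with unit costs and $n=|V|$. Build $G'$ as follows.
\begin{itemize}
  \item Delete $t$ and add $M=n+k+1$ \emph{copies} $t^1,\dots,t^M$.
  \item For each copy $t^j$ and each arc $ut\in A$, add the arc $ut^j$ with travel time $1$.
  \item Copies get no outgoing arcs.
  \item All other arcs keep travel time $1$.
\end{itemize}
Take ignition vertex $s$, $k$ resources all released at time $t_1=1/2$, delay $\Delta=h$, horizon $H=h$, and bound $b_{\mathrm{WSP}}=n-1$. (To stay integral, scale everything by $2$.) All numbers are polynomial in the input size, so strong NP-completeness transfers.

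\textbf{Key facts.}
\begin{enumerate}
  \item Only $s$ has arrival time below $1/2$, both with and without protection, since delays only increase arrival times. Hence exactly the vertices of $V'\setminus\{s\}$ can be protected.
  \item Protecting a copy is useless: a copy has no outgoing arcs, and protection delays only outgoing arcs.
  \item Protecting $v\ne s$ makes every path through $v$ cost at least $h+1>H$. Within the horizon, $v$ therefore behaves as deleted, except that $v$ itself may still burn.
  \item By symmetry, the copies share one arrival time. Either all $M$ copies burn or none does.
\end{enumerate}

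\textbf{Correctness.}
\begin{itemize}
  \item \emph{MVNP yes $\Rightarrow$ WSP yes.} Given $S$ with $|S|\le k$ and $d_{G\setminus S}(s,t)\ge h$, protect $S$. By fact~3, every copy has arrival time at least $h=H$, so no copy burns. At most the $n-1$ non-sink original vertices burn, which meets the bound $b_{\mathrm{WSP}}=n-1$.
  \item \emph{WSP yes $\Rightarrow$ MVNP yes.} Suppose an allocation burns at most $n-1<M$ vertices. By fact~4, no copy burns. Let $S$ be the protected non-copy vertices; $s\notin S$ by fact~1, and $|S|\le k$. Consider any path from $s$ to $t$ of length less than $h$ in $G\setminus S$. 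It would give a copy an arrival time below $H$, contradicting that no copy burns. So $S$ is an MVNP solution.
\end{itemize}

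The main obstacle is preventing the "free" protections from breaking the correspondence. Two cases must be excluded: protecting $s$, and protecting the sink itself, neither of which MVNP allows. The first is handled by the positive common release time, which makes $s$ ineligible. The second is handled by the copy construction, which makes protection of the sink worthless. I would also double-check the strict inequality $a_v<H$ at the boundary $d=h$: a path through a protected vertex costs at least $h+1$, and with $H=h$ a vertex at distance exactly $h$ does not burn, so the boundary case works out correctly.
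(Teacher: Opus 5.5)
Your proof is correct and follows essentially the same route as the paper. Both reduce from MVNP: delete the sink and replace it by many copies without outgoing arcs, so that reaching the sink dominates the burned count and protecting it is useless; release all $k$ resources at a positive time below the minimum arc cost so $s$ cannot be protected; and set $\Delta=H=h$ with bound $n-1$. The only difference is cosmetic: you use $M=n+k+1$ shared copies of $t$ with a symmetry argument, while the paper attaches $n$ private leaves to each predecessor of $t$ that lies on a short $s$--$t$ path.
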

\begin{proof}
  Consider an MVNP instance $(G,t_{uv},s,t,k,h)$ and write $n = |V|$. Let $\mathcal N = \{v\in V:vt\in A\text{ and }d_G(s,v)+t_{vt}<h\}$ be the predecessors of sink $t$ that lie on an $s$--$t$ path shorter than $h$. For each $v\in\mathcal N$, create a set $U_v = \{u_1^v,\ldots,u_n^v\}$ of $n$ new leaves and replace the arc $(v,t)$ by the arcs $(v,u_i^v)$, each with travel time $t_{vt}$. Remove $t$ and all its incident arcs, retaining all other vertices, arcs, and travel times. Denote the resulting graph by $\widehat G$ and put $\mathcal U = \bigcup_{v\in\mathcal N}U_v$.

Let $\epsilon$ be the minimum arc travel time in the original instance. Construct a WSP instance on $\widehat G$ with ignition $s$, $k$ resources all released at the common time $\epsilon/2$, delay $\Delta = h$, horizon $H = h$, and burning bound $b_{\mathrm{WSP}} = n-1$. Since all travel times are positive, every reachable vertex other than $s$ has arrival at least $\epsilon$, so every vertex eligible in the MVNP solution is available when the resources are released; the positive release time merely prevents allocating a resource to the source.

Suppose $S$ is an MVNP solution. Allocate resources to the vertices of $S$. If some $u_i^v\in\mathcal U$ burned before $H$, there would be an $s$--$u_i^v$ path avoiding $S$ whose length, together with $t_{vt}$, is below $h$. Replacing its final arc $(v,u_i^v)$ by $(v,t)$ would give an $s$--$t$ path of the same length in $G\setminus S$, a contradiction. Thus no vertex of $\mathcal U$ burns, and at most the $n-1$ original vertices of $\widehat G$ burn. The WSP instance is therefore a yes-instance.

Conversely, suppose an allocation burns at most $n-1$ vertices, and let $S = P^{\A}\cap V$ be the protected original vertices. Then $|S|\leq k$ and $s,t\notin S$. If $G\setminus S$ contained an $s$--$t$ path of length below $h$, let $v$ be the predecessor of $t$ on that path. Then $v\in\mathcal N$. The corresponding path to $v$ in $\widehat G$ avoids every protected original vertex, and extending it to each of the $n$ leaves in $U_v$ gives an undelayed path of length below $h$. Allocating a resource to a leaf does not alter its own arrival time. Hence all $n$ vertices of $U_v$ burn, contradicting the bound $n-1$. Therefore $d_{G\setminus S}(s,t)\geq h$, so $S$ is an MVNP solution.

The construction has at most $n^2$ new vertices and uses only polynomially bounded numbers. It is therefore a strong polynomial reduction.
\end{proof}

We next strengthen the graph restriction to planar graphs. Unlike the preceding reduction, the hardness now comes from the temporal structure of the WSP. We reduce from \textsc{Firefighter}. Given a rooted tree $T$, the fire starts at its root $r$ at time zero. At each step $i=1,2,\ldots$, a firefighter defends an undefended, unburned vertex, permanently defending it and its descendants. The fire then advances from each burning vertex to its undefended neighbors. Defended or burned vertices remain permanently so. The decision problem asks whether at least $p$ vertices can be saved. It is NP-complete even on trees of maximum degree three~\parencite{Finbow/2007}.

A set $D$ of defended vertices may be assumed to be an antichain. It is schedulable when its vertices can be assigned distinct steps such that each $v\in D$ is defended no later than its depth $d(v)$. If $T_v$ is the subtree rooted at $v$, the number of saved vertices of set $D$ is $S(D) = \sum_{v\in D}|T_v|$.

\begin{theorem}\label{thm:wsp-planar}
  The WSP is strongly NP-complete on trees, and hence on planar graphs.
\end{theorem}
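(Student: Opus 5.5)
We reduce from \textsc{Firefighter} on trees of maximum degree three. Given a rooted tree $T$ with root $r$ and target $p$, we build a WSP instance on $T$ itself, with arcs oriented away from the root. Every arc gets travel time $1$, the ignition vertex is $s=r$, and there is one resource per step. We take $k=n-1$ resources (with $n=|V(T)|$) and release resource $i$ at time $t_i=i-\tfrac12$. Since the arrival time of a vertex $v$ is its depth $d(v)$, a resource released at time $i-\tfrac12$ can be placed exactly on vertices at depth at least $i$. The delay is $\Delta=n+1$ and the horizon is $H=n$, so a protected vertex never lets fire pass to its children before $H$. All numbers are polynomial, so the reduction is strong once correctness holds. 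The burning bound is $b_{\mathrm{WSP}}=n-p$. Note that with $H=n$ every unprotected vertex reachable by undelayed paths burns, because its depth is at most $n-1<H$.

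\textbf{Burned vertices in the WSP.} For an allocation $\A$, a vertex $w$ burns before $H$ exactly when no proper ancestor of $w$ is protected. A protected vertex itself still burns, because the delay acts only on outgoing arcs. Hence, with $D$ the set of minimal protected vertices (an antichain), the number of burned vertices is $n-\sum_{v\in D}(|T_v|-1)$. This differs from $n-S(D)$ by the count $|D|$. The mismatch is the main obstacle: \textsc{Firefighter} saves the defended vertex, but the WSP burns it.

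\textbf{Handling the mismatch.} I would fix it by subdividing. Attach a path gadget above each tree vertex: replace each vertex $v$ of $T$ by a pair $v^{\mathrm{in}}\to v^{\mathrm{out}}$ with travel time $0$, or else weight vertices by replicating them. The cleaner choice is replication. Give every original vertex $v$ a set of $M=n+1$ leaf copies hanging from it with travel time $1$, and count $p' = pM$. Then protecting $v$ saves all copies in $T_v$ except the ones of $v$? No. Protecting $v$ delays $v$'s outgoing arcs, so $v$'s own leaf copies are saved too. Hence only the single vertex $v$ burns, while its copies, together with all of $T_v$ and its copies, are saved. With $M$ copies per original vertex, the number saved is exactly $(M+1)S(D)-|D|$. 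Because $|D|<M+1$, we get $(M+1)S(D)-|D|\geq (M+1)p-(M+1)+1$ if and only if $S(D)\geq p$. We therefore set the bound accordingly: $b_{\mathrm{WSP}} = (M+1)n-(M+1)p+M$. The new leaves have depth $d(v)+1$. The extra resources placed on leaves are useless, and resources placed on copies save nothing beyond themselves. I would argue that an optimal allocation may move any such resource or else discard it, using $|D|\le k$ and the $\pm|D|$ slack absorbed by $M$.

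\textbf{Equivalence and the schedulability check.} The remaining step is to show that feasible WSP allocations restricted to original vertices correspond to schedulable antichains. This step is Hall-type. Resource $i$ may go to vertex $v$ if and only if $i\le d(v)$, which is exactly the \textsc{Firefighter} condition "defended no later than its depth", with distinct steps given by injectivity. Conversely, a firefighter strategy defends at most one vertex per step. Its defended vertex at step $i$ is at depth at least $i$, since otherwise that vertex would already be burned, and it yields the allocation $i\mapsto$ (vertex defended at step $i$). Non-antichain protected sets only waste resources, so we reduce to the minimal elements. Membership in NP was already noted, and trees are planar, which gives the claim for planar graphs.
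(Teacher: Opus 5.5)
Your proposal is correct and is essentially the paper's proof. The paper also attaches $M$ pendant leaves to every tree vertex, matches the release condition $i\le d(v)$ to the \textsc{Firefighter} schedule, and absorbs the $|D|$ protected-but-burning vertices through the slack $|D|<M+1$ in the count $(M+1)S(D)-|D|$; it uses $M=\delta+1$, $k=\delta$ and $H=\Delta=\delta+2$. The one slip is that the leaves can lie at depth $\delta+1\le n$, so the horizon must be raised from $H=n$ to at least $\delta+2$ (for example $H=n+1$, which your $\Delta=n+1$ still blocks); otherwise your claim that every unsaved vertex burns, and hence the exact count, fails for a path.
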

\begin{proof}
Let $(T,p)$ be a \textsc{Firefighter} instance, let $\delta$ be the depth of $T$, and orient every edge away from $r$. Give every arc unit travel time and use $r$ as the ignition vertex. Attach $M = \delta+1$ new out-neighbours to every vertex of $T$; denote the resulting directed tree by $\widehat T$. Use $k = \delta$ resources, and release one resource at each time $1,\ldots,\delta$. Set horizon $H = \delta + 2$, and delay $\Delta = H$.

Every original vertex $v$ has free-flow arrival $d(v)$. Thus assigning the resource released at time $i$ to $v$ is feasible exactly when $i\leq d(v)$, reproducing the \textsc{Firefighter} schedule. Since $\Delta=H$, protecting $v$ prevents fire from reaching every proper descendant of $v$ before the horizon. Protecting one of the new leaves has no effect, because WSP resources delay only outgoing arcs, so an allocation may be assumed to protect a schedulable antichain $D\subseteq V(T)$.

Each original vertex carries $M$ leaves and the subtrees rooted at vertices of $D$ are disjoint. The number of vertices saved by the WSP allocation is therefore
\[
 \sigma(D) = (1+M)S(D)-|D|.
\]
The subtraction accounts for the protected vertices themselves: a WSP resource delays their outgoing arcs but does not prevent those vertices from burning. Set the burning bound to
\[
 b_{\mathrm{WSP}} = |V(\widehat T)|-\big((1+M)p-k\big).
\]
By the choice of $b_{\mathrm{WSP}}$, an allocation burns at most $b_{\mathrm{WSP}}$ vertices if and only if $\sigma(D)\geq(1+M)p-k$. Since $0\leq|D|\leq k$,
\[
 (1+M)S(D)-k\;\leq\;\sigma(D)\;\leq\;(1+M)S(D).
\]
If \textsc{Firefighter} saves at least $p$ vertices with a defended set $D$, the first inequality gives $\sigma(D)\geq(1+M)p-k$, so the WSP instance is a yes-instance. Conversely, if $\sigma(D)\geq(1+M)p-k$, the second inequality gives $(1+M)S(D)\geq(1+M)p-k$, hence, using $1+M=\delta+2=k+2$,
\[
 S(D)\geq p-\frac{k}{k+2}>p-1,
\]
and integrality implies $S(D)\geq p$. The reduction is polynomial and all numbers are polynomially bounded, proving strong NP-completeness.
\end{proof}

The preceding reduction uses the successive resource releases essentially: if all resources are available at time zero, protecting the source would delay every outgoing arc beyond the horizon, making the constructed instances trivial. We next show that the temporal hardness survives the grid structure used in our application.

\begin{proposition}[Grid layout]\label{lem:grid-layout}
For a rooted tree $T$ on $n$ vertices of maximum degree three and an integer $G\geq1$, one can construct, in polynomial time, a lattice section of size $O(n\sqrt G)\times O(n\sqrt G)$ containing: a distinct cell for each $v\in V(T)$; for each edge $uv\in T$, a lattice path, called a \emph{channel}, joining the cells of $u$ and $v$; and, for each $v$, a set $C_v$ of exactly $G$ cells, the \emph{block} of $v$, admitting a Hamiltonian path that begins at a cell adjacent to the cell of $v$. The channels are pairwise grid-vertex-disjoint except at cells representing common endpoints. The sets $C_v$ are pairwise disjoint and disjoint from all channels. The total number of channel cells is $O(n^2\sqrt G)$.
\end{proposition}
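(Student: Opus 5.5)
We will give an explicit, coarse construction rather than invoke a general orthogonal-drawing theorem. Let $g=\lceil\sqrt G\,\rceil$, so a $g\times g$ square contains at least $G$ cells. Every tree vertex and every edge will get its own private region of the lattice.

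\textbf{Vertex slots.} Order $V(T)$ as $v_1,\ldots,v_n$ (say in BFS order). Reserve a horizontal strip of height $g+2$ at the bottom of the lattice, subdivided into $n$ slots of width $g+3$. In slot $j$, place the cell of $v_j$ in the top row of the strip. Directly below it, fill a $g\times g$ square in boustrophedon (snake) order, starting at the cell just beneath $v_j$. The first $G$ cells of this snake form $C_{v_j}$. A snake is a Hamiltonian path on any prefix of itself, so $C_{v_j}$ has exactly $G$ cells and its Hamiltonian path starts adjacent to the cell of $v_j$. Blocks lie in distinct slots, so they are pairwise disjoint. Each slot keeps a spare column on either side, and we keep the top-row cells to the left and right of $v_j$, as well as the cell above it, free for channel ports. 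Since $T$ has maximum degree three, three ports suffice (left, up, right), and no port is blocked by the block below.

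\textbf{Channels.} Above the strip, allocate one private row per edge: row $i$ for edge $e_i$, with rows spaced two apart. Give every edge endpoint a private vertical track. Each vertex occupies a width of $g+3\geq 4$ columns, so it can offer up to three distinct columns as vertical tracks, fed from its ports by short horizontal stubs in the top row of its slot. The channel for $e_i=uv$ goes up the track of $u$ to row $i$, across row $i$ to the track of $v$, and down to $v$.

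Two channels can meet only where a horizontal segment crosses a vertical track. This is the genuine obstacle, because a plain lattice path cannot cross another one without sharing a cell. We remove the crossings by using planarity of $T$: choose the embedding order of the vertices and the nesting of the rows so that the drawing is an orthogonal planar drawing. Concretely, lay the vertices out in DFS order, so every subtree occupies a contiguous interval of slots. Assign rows by nesting depth: an edge spanning a larger interval gets a higher row, and its vertical tracks are taken at the extreme available columns of its endpoints. This is the standard nested-interval (arc diagram) argument. The edges of a tree drawn with vertices in DFS order on a line and arcs above it form a non-crossing (outerplanar) arc diagram, so every arc is either nested inside or disjoint from every other. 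Realised with rectangular "U" paths, in which an outer arc uses a higher row and outer columns, the arcs are pairwise vertex-disjoint except at shared endpoints.

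I expect to spend the most care on verifying this step. Specifically, at a vertex with three incident edges, the port columns must be ordered consistently with the nesting: the left-going edges use the left columns, and among the right-going edges, longer arcs use the outer columns.

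\textbf{Size and count.} The width is $n(g+3)=O(n\sqrt G)$. The height is $g+2+2(n-1)=O(n+\sqrt G)$, which is $O(n\sqrt G)$. Each channel has length at most width plus twice the height, which is $O(n\sqrt G)$. Over $n-1$ edges, this gives $O(n^2\sqrt G)$ channel cells. Every step is an explicit coordinate computation, so the construction runs in polynomial time. Finally, the channel cells are disjoint from the blocks, since channels stay in the top row of the strip or above it, while the blocks lie strictly below it.
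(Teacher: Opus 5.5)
Your proof is correct, but it takes a different route from the paper's. The paper attaches an auxiliary leaf $z_v$ to every vertex, invokes the Biedl--Kant orthogonal drawing of the resulting maximum-degree-four tree on an $O(n)\times O(n)$ grid, scales by $\lceil\sqrt G\rceil$, and carves each snake block $C_v$ out of the space around the scaled auxiliary edge $vz_v$.

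You instead build a one-page book embedding by hand:
\begin{itemize}
\item vertices are placed in DFS preorder along a strip;
\item each block is a snake in a private $g\times g$ square directly beneath its vertex;
\item each tree edge is a U-shaped channel above the strip, with rows ordered by interval length;
\item at each vertex, the parent edge takes the left track, and among the child edges the longer arc takes the more outer (more left) track.
\end{itemize}
The preorder non-crossing property together with this port ordering is exactly what is needed. All three ways a horizontal segment can meet a vertical track check out. A track at a vertex strictly inside the arc belongs to a nested, hence lower, arc. A track at a vertex outside the arc lies outside its column span. At a shared endpoint, the port ordering keeps the shorter arc's horizontal segment away from the longer arc's track.

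What your route buys is self-containment and explicitness. The block placement is verified rather than asserted; the paper's claim that the scaled region around $vz_v$ holds a $G$-cell snake starting next to $v$ is stated from the scaling without a check. You also get a flatter lattice, $O(n\sqrt G)\times O(n+\sqrt G)$. What the paper's route buys is brevity and generality. It works for any planar graph of maximum degree four, which is why the same drawing theorem is reused in Proposition~\ref{lem:wwsp-grid-embed}. Your arc-diagram argument, by contrast, relies on the tree being one-page embeddable.

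Two small fixes are needed. First, drop the initial ``BFS order'': BFS order generally produces crossing arcs. For example, take a root with children $a,b$, each having one child; the arcs $[1,3]$ and $[2,4]$ cross. Second, state that the lowest channel row lies strictly above the row containing the up-ports, so that no horizontal segment runs through a port cell. Your ``spaced two apart'' convention achieves this if the first channel row is two rows above the strip.
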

\begin{proof}
For every $v\in V(T)$, attach a new leaf $z_v$ by the edge $vz_v$. This auxiliary edge serves only to reserve space for $C_v$. Since $T$ has maximum degree three, the augmented tree has maximum degree at most four, so it admits an orthogonal drawing on an $O(n)\times O(n)$ grid, with vertices represented by cells and edges represented by non-crossing lattice paths with at most two bends~\parencite{Biedl.Kant/1998}. Scale this drawing by $\lceil\sqrt G\rceil$. Each original tree edge then becomes a channel of length $O(n\sqrt G)$. For each $v$, the region associated with the auxiliary edge $vz_v$ contains enough unused grid cells to form a lattice path of exactly $G$ cells whose first cell is adjacent to the cell representing $v$; let $C_v$ be the set of cells on this path, which we call the \emph{block} of $v$; its cells are arranged as a snake, so fire entering the first cell reaches all $G$ cells in turn. The channels are disjoint except at shared tree vertices, and the reserved regions are pairwise disjoint and disjoint from the channels. Since $T$ has $n-1$ edges, the channels contain $O(n^2\sqrt G)$ cells in total.
\end{proof}

\begin{theorem}\label{thm:wsp-grid}
  The WSP is strongly NP-complete on full weighted directed grids.
\end{theorem}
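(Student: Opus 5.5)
We reduce from \textsc{Firefighter} on trees of maximum degree three, embedding the reduction of Theorem~\ref{thm:wsp-planar} into a grid via Proposition~\ref{lem:grid-layout}. The tree reduction used unit travel times, leaves hanging off each vertex to give saved vertices weight, and the release schedule $1,\ldots,\delta$ to encode the firefighter steps. On a grid we will instead realise: tree edges as channels whose total travel time is exactly $1$; the $M$ pendant leaves of $v$ as the block $C_v$; and all cells not used by the layout as \emph{inert} cells whose incoming arcs have a prohibitively large travel time $L > H + \Delta$. Because the grid is full, every adjacent pair has arcs in both directions. We must therefore also make each channel directed by giving its backward arcs travel time $L$.

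\textbf{Construction.} Let $n=|V(T)|$ and take the layout from Proposition~\ref{lem:grid-layout} with block size $G$, chosen polynomially large; something like $G = n^3$ should suffice. The block must dominate the channel cells and the protected cells, which together number $O(n^2\sqrt G)$.

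Travel times along each channel from $u$ (parent) to $v$ (child) are $1/\ell$, where $\ell$ is the channel length. This makes the arrival time at the cell of $v$ equal to $d(v)$.

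The arc from the cell of $v$ into the first cell of $C_v$ gets a small weight $\eta$. Arcs along the snake path of $C_v$ get weight $\eta/G$, so the whole block burns before $d(v)+1$ unless $v$ is protected. All other arcs, including those leaving a block or a channel sideways, get weight $L$.

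Releases occur at times $1,\ldots,\delta$, with $H=\delta+2$ and $\Delta=H$. All numbers are rationals with polynomially bounded numerators and denominators. Scaling by $\mathrm{lcm}$-type factors gives polynomially bounded integers, so the reduction is strong.

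\textbf{Correctness.} With no protection, exactly the cells of tree vertices, channels, and blocks burn before $H$, since everything else is behind weight-$L$ arcs. Protecting the cell of $v$ delays all its outgoing arcs by $H$. This saves the whole block $C_v$ and every channel and block below $v$.

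Protecting a channel cell or block cell also saves something, which is the new difficulty compared with the tree case. The key counting step is to define the weight of a saved tree vertex as $G$ plus lower-order terms. The saving of any allocation is then
\[
G\cdot S(D) + O(n^2\sqrt G + k),
\]
where $D$ is the set of tree vertices whose block is saved. This set $D$ is again a schedulable antichain: saving $C_v$ requires a resource at $v$'s cell, or at a channel cell above $v$ burning at time less than $d(v)$. In the latter case we can replace that cell by its nearest tree-vertex ancestor. I would argue this replacement preserves schedulability, because the release-time constraint only loosens as we move down. Alternatively, we can note that the cell at time $a$ on the channel into $v$ corresponds to firefighter defence of $v$ at step $\le \lceil a\rceil = d(v)$.

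Setting the bound as $b_{\mathrm{WSP}} = |\text{burnable}| - (Gp - n^2\sqrt G\cdot c)$ for a suitable constant $c$, integrality of $S(D)$ together with $G > 2cn^2\sqrt G$ gives $S(D)\geq p$ exactly as in Theorem~\ref{thm:wsp-planar}.

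\textbf{Main obstacle.} The hardest step is the bookkeeping for allocations to channel and block cells, and making sure no spurious fire paths arise. Fire can leak through bidirectional grid adjacencies between distinct channels or blocks that touch diagonally or side by side. The weight-$L$ rule on every arc not along a designated path handles this, but I would verify carefully that $L$ exceeds $H$ even after adding any delays. I would also check that a resource on a channel cell at fractional arrival time still corresponds to a legal firefighter step. If the rounding argument fails, I would first try to fix it by making channel weights concentrate all travel time on the arc entering the child cell. Then every channel cell other than the child burns at an integer time $d(u)$, and protecting it is no better than protecting the parent.
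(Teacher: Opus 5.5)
Your route is the paper's: \textsc{Firefighter} on trees of maximum degree three, embedded with Proposition~\ref{lem:grid-layout}. Channels carry total weight one, blocks of size $G$ give each saved tree vertex its weight, and all other arcs are heavy; releases are at $1,\ldots,\delta$, with $H=\delta+2$ and $\Delta=H$, and a threshold lets block counts dominate everything else. Two steps fail as written, however.

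First, $G=n^3$ is too small. The layout scales the whole drawing by $\lceil\sqrt G\rceil$, so the channels contain $O(n^2\sqrt G)=O(n^{3.5})$ cells, which a block of $n^3$ cells does not dominate. Your own condition $G>2cn^2\sqrt G$ forces $G>4c^2n^4$; take $G=n^5$ as the paper does.

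Second, and more substantively, your accounting ``saving $=G\cdot S(D)+O(n^2\sqrt G+k)$'' ignores resources placed \emph{inside} a block. A resource on the $j$th cell of $C_v$, with nothing protected above $v$, saves the remaining $G-j$ snake cells, up to $G-1$ cells. Yet $C_v$ is not fully saved, so $v$ contributes nothing to $S(D)$. Put resource $i$ on the first cell of $C_{v_i}$, where $v_i$ is the depth-$i$ vertex on a deepest root--leaf path. This is feasible and saves $\delta(G-1)$ cells with $S(D)=0$. So the inference from a large saving to $S(D)\geq p$ does not follow from your bookkeeping. Your schedulability justification (``saving $C_v$ requires a resource at $v$'s cell or on a channel cell above $v$'') also covers only channel cells.

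The missing step is the same integrality observation you used for channels, applied to blocks. The cells of $C_v$ burn in $(d(v),d(v)+1)$, provided $2\eta<1$, and releases are integers. So a resource there has release $i\leq d(v)$ and can be moved onto the cell of $v$, which saves all of $C_v$ and every descendant block. Move every channel and block resource to the downstream tree vertex and discard duplicates and descendants. Then all blocks containing \emph{any} saved cell lie in $\bigcup_{v\in D}T_v$ for a schedulable antichain $D$. Hence at least $G(n-S(D))$ block cells burn under every allocation. The paper then sets $b_{\mathrm{WSP}}=G(n-p+1)-1$ and uses that all non-block burnable cells number fewer than $G$.

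Two smaller points:
\begin{itemize}
\item Moving a channel resource to its nearest tree-vertex \emph{ancestor} tightens, not loosens, the release constraint. It is feasible here only because channel arrivals lie strictly between consecutive integers. Use the downstream endpoint, as in your alternative.
\item Rescaling by the lcm of the channel lengths need not give polynomially bounded integers: up to $n-1$ distinct lengths of size $O(n\sqrt G)$ can have a superpolynomial lcm. Use a common denominator instead, e.g.\ weight $1/N$ on every channel arc except the last, with $N$ exceeding every channel length. Alternatively, keep rational weights with polynomially bounded numerators and denominators.
\end{itemize}
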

\begin{proof}
  Reduce again from \textsc{Firefighter} on a rooted tree $T$ of maximum degree $3$, with $n = |V(T)|$, depth $\delta$, and target $p$. Apply Proposition~\ref{lem:grid-layout} with $G = n^5$. Direct every channel from parent to child and assign positive arc weights summing to one along each channel. Direct the block of each $v$ away from $v$, and assign positive weights to its entry and internal arcs that sum to less than one. Give every other grid arc weight at least $H = \delta+2$. Use ignition $r$, delay $\Delta = H$, one resource released at each time $1,\ldots,\delta$, and burning bound
\[
 b_{\mathrm{WSP}} = G(n-p+1)-1.
\]
The cell representing $v$ has arrival $d(v)$; the cells inside the channel entering $v$ have arrivals in $(d(v)-1,d(v))$; and the cells of $C_v$ have arrivals in $(d(v),d(v)+1)$. Every cell outside the channels and blocks has arrival at least $H$.

For the purpose of counting saved blocks, a resource inside the channel entering $v$ can be moved to $v$: the same release remains feasible and the same descendant blocks are saved. A resource inside $C_v$ can likewise be moved to $v$, saving at least the same blocks. Resources outside the low-weight structure have no effect before $H$. After removing duplicate and descendant assignments, the blocks saved by any allocation are therefore contained in those saved by a \textsc{Firefighter}-schedulable antichain $D$. Conversely, protecting the tree-vertex cells of any such $D$ is feasible and saves exactly the blocks corresponding to the $S(D)$ tree vertices saved in \textsc{Firefighter}.

Under this canonical allocation, $G(n-S(D))$ block cells burn. Every other cell that can burn is a channel cell or one of the $n$ tree-vertex cells, so their number is $O(n^2\sqrt G)+n=O(n^{4.5})+n<G$ for sufficiently large $n$; a fixed scaling of $G$ handles the remaining finite cases. Hence
\[
 |B_H^{\A}|=G(n-S(D))+q,\qquad 0\leq q<G.
\]
If $S(D)\geq p$, then $|B_H^{\A}|\leq G(n-p)+(G-1)=b_{\mathrm{WSP}}$. If every \textsc{Firefighter} defense saves at most $p-1$ tree vertices, then the preceding resource-relocation argument shows that any WSP allocation saves cells in at most $p-1$ blocks. Consequently, at least $n-p+1$ complete blocks burn, accounting for at least $G(n-p+1)>b_{\mathrm{WSP}}$ burned cells. Thus the two instances have the same answer. The construction is polynomial with polynomially bounded weights, proving strong NP-completeness.
\end{proof}

\subsection{The WWSP: planar shortest-path interdiction}

The weighted variant introduced by \textcite{Wei/2011} assigns values to vertices, assumes all resources are available at time zero, and allows vertices to be declared unavailable for protection. Since certain regions have higher priorities (e.g.~those containing critical infrastructure or residential areas), it is natural to assign a weight to each vertex and minimize the total lost weight.

\begin{definition}[Weighted Wildfire Suppression Problem, WWSP]
Given a directed graph $G=(V,A)$ with positive arc travel times, vertex values $w_v\geq0$, an ignition vertex $s\in V$, $k$ resources available at time zero, a forbidden set $F\subseteq V$, a common delay $\Delta \geq 0$, and a time horizon $H \geq 0$, the WWSP asks for an allocation $\A$ with $P^{\A}\cap F=\emptyset$ minimizing the total weight of the burned vertices
\[
 \sum_{v\in V}w_v[a_v^{\A}<H].
\]
Its decision version asks whether this value is at most a given bound $b_{\mathrm{WWSP}}$.
\end{definition}

We reduce from the directed shortest-path edge-interdiction problem (DSPEIP). An instance consists of a directed graph $G=(V,A)$, nonnegative integer arc lengths $\omega(a)$, nonnegative integer interdiction costs $c(a)$, vertices $u,v$, a budget $B$, and a threshold $L$. It asks whether there is $I\subseteq A$ with $c(I) = \sum_{a\in I} c(a) \leq B$ such that $d_{G\setminus I}(u,v)\geq L$. DSPEIP is strongly NP-complete on planar directed graphs~\parencite[Theorem~1.3]{Pan.Schild/2016}.

We first normalize an instance. Arcs of interdiction cost zero may be deleted. For $n = |V|$, set
\[
 \bar\omega(a) = 2n\omega(a)+2,
 \qquad \bar L = 2nL,
 \qquad \varepsilon = \min_{a\in A}\bar\omega(a).
\]
Every simple path $P$ has at most $n-1$ arcs, and hence, for every interdiction set $I$,
\begin{equation}\label{eq:dspeip-scaling}
 d_{G\setminus I}(u,v)\geq L
 \quad\Longleftrightarrow\quad
 \bar d_{G\setminus I}(u,v)\geq\bar L.
\end{equation}
Indeed, a path of original length at most $L-1$ has scaled length at most $2n(L-1)+2(n-1)<2nL$, whereas a path of original length at least $L$ has scaled length at least $2nL$.

Construct a directed graph $\widehat G$ by replacing each arc $a=xy$ by $c(a)$ internally disjoint two-arc paths
\[
 x\longrightarrow q_a^j\longrightarrow y,
 \qquad j=1,\ldots,c(a),
\]
introducing the auxiliary vertex $q_a^j$ on the $j$th path. Give the two arcs respective travel times $\varepsilon/2$ and $\bar\omega(a)-\varepsilon/2$. These times are positive and every path has total length $\bar\omega(a)$. The construction is planar: the parallel paths can be drawn in a narrow corridor around the embedding of the original arc. Because DSPEIP is strongly NP-complete, the costs and therefore the number of auxiliary vertices are polynomially bounded.

\begin{theorem}\label{thm:wwsp-planar}
  The WWSP is strongly NP-complete on planar graphs.
\end{theorem}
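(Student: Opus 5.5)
Membership in NP follows from the general remark at the start of this section. For hardness I will reduce from planar DSPEIP, using the normalized instance and the graph $\widehat G$ constructed just before the statement. Taking $\widehat G$ as the WWSP graph keeps planarity.

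The WWSP instance is as follows:
\begin{itemize}
\item ignition vertex $s=u$;
\item $k=B$ resources, all available at time zero;
\item forbidden set $F=V$, so only auxiliary vertices $q_a^j$ can be protected;
\item delay $\Delta=\bar L$ and horizon $H=\bar L$;
\item values $w_v=1$ and $w_x=0$ for every other vertex $x$;
\item bound $b_{\mathrm{WWSP}}=0$.
\end{itemize}
The question is then whether $v$ can be kept from burning before $\bar L$. If $u=v$ or $L=0$, the instance is handled trivially. The auxiliary vertices $q_a^j$ are not ignition points, so they may be protected: each has arrival at least $\varepsilon/2>0$.

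Protecting $q_a^j$ adds $\Delta=\bar L$ to its single outgoing arc. Any path through that copy therefore has length at least $\bar L$, so in effect the copy is removed from every path of length below $\bar L$.

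For the forward direction, take an interdiction set $I$ with $c(I)\le B$ and protect all $c(a)$ copies of each $a\in I$; this uses exactly $c(I)\le B$ resources. Consider any $u$--$v$ path in $\widehat G$ that avoids protected copies. It projects to a $u$--$v$ walk in $G\setminus I$ with the same scaled length $\sum\bar\omega$. By \eqref{eq:dspeip-scaling} this length is at least $\bar L$, since shortcutting a walk to a simple path only shortens it. Any path that uses a protected copy has length at least $\bar L$ through the delay alone. Hence $a_v\ge \bar L=H$, $v$ does not burn, and the burned weight is $0$.

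For the converse, suppose an allocation with at most $B$ resources keeps $v$ unburned. Let $I$ be the set of arcs $a$ for which every copy $q_a^j$ is protected. Then $c(I)=\sum_{a\in I}c(a)\le B$. Every arc $a\notin I$ has an unprotected copy whose two-arc path has undelayed length exactly $\bar\omega(a)$. Any $u$--$v$ path in $G\setminus I$ therefore lifts to an undelayed path in $\widehat G$ of equal scaled length. Since $v$ does not burn, every such path has length at least $\bar L$, so $\bar d_{G\setminus I}(u,v)\ge\bar L$, and \eqref{eq:dspeip-scaling} gives $d_{G\setminus I}(u,v)\ge L$.

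One subtlety is that protecting vertices in $V$ could delay all out-arcs of $x$ at once, which is cheaper than edge interdiction. Forbidding $V$ through $F$ is exactly what rules this out, so the reduction depends on it.

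Strong hardness follows because DSPEIP is strongly NP-complete on planar digraphs. The quantities $\bar\omega$, $\bar L$ and $\varepsilon$ are polynomially bounded, and the number of copies $\sum_a c(a)$ is polynomial. Half-integers can be avoided by doubling all times.

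The step I expect to need the most care is the converse. It has to hold that partially protecting the copies of an arc gives no benefit, and that the delay $\Delta=\bar L$ fully blocks a protected copy within the horizon. Both follow from the parallel-copy construction together with the equality $\Delta=H$.
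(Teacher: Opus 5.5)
Your proposal is correct and is essentially the paper's reduction. You use the same instance on $\widehat G$ with $F=V$, $k=B$, $H=\Delta=\bar L$ and the unit-valued target $v$, the same forward direction (protect every copy of each interdicted arc), and the same converse via the arcs all of whose copies are protected, lifted through one unprotected copy per arc; the only differences are cosmetic, namely that you argue the converse directly rather than by contradiction and add remarks on trivial cases and integrality.
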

\begin{proof}
Use $\widehat G$ with ignition $s = u$, forbidden set $F = V$, $k = B$, horizon and delay $H = \Delta = \bar L$, target value $w_v = 1$, all other vertex values zero, and decision bound $b_{\mathrm{WWSP}} = 0$. Thus only auxiliary vertices $q_a^j$ can be protected.

If $I$ is a feasible DSPEIP interdiction, protect every auxiliary vertex associated with each arc $a\in I$. This uses $\sum_{a\in I}c(a)\leq B$ resources. Every $u$--$v$ path either traverses a protected auxiliary vertex, and therefore has length at least $\Delta=H$, or projects onto a path in $G\setminus I$, whose scaled length is at least $\bar L=H$ by \eqref{eq:dspeip-scaling}. Hence $a_v^{\A}\geq H$ and the WWSP objective is zero.

Conversely, let $\A$ have objective zero and define
\[
 I(\A) = \left\{a\in A:q_a^j\in P^{\A}\text{ for every }j=1,\ldots,c(a)\right\}.
\]
The sets of auxiliary vertices associated with distinct arcs are disjoint, so $c(I(\A))\leq|P^{\A}|\leq B$. Suppose that the original graph $G\setminus I(\A)$ contained a $u$--$v$ path $Q$ of length below $L$. For every arc $a$ of $Q$, we have $a\notin I(\A)$; hence, by the definition of $I(\A)$, there exists an index $j(a)$ such that $q_a^{j(a)}\notin P^{\A}$. Replacing every arc $a=xy$ of $Q$ by the path $x\to q_a^{j(a)}\to y$ would yield an undelayed $u$--$v$ path in $\widehat G$ of length below $\bar L=H$, contradicting $a_v^{\A}\geq H$. Thus $I(\A)$ is a feasible DSPEIP interdiction. All numerical values are polynomially bounded, completing the strong reduction.
\end{proof}

The forbidden set also allows the planar construction to be transferred to a grid: all routing vertices introduced by the transfer can simply be made unavailable for protection.

\begin{proposition}[Degree reduction]\label{lem:wwsp-degree}
The planar instances produced in Theorem~\ref{thm:wwsp-planar} can be transformed into equivalent planar WWSP instances of maximum total degree three, where the total degree of a vertex is the sum of its in- and out-degrees.
\end{proposition}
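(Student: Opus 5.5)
Starting from the instance $(\widehat G,s=u,F=V,k=B,H=\Delta=\bar L,w)$ built in Theorem~\ref{thm:wwsp-planar}, we reduce degree one vertex at a time, replacing each high-degree vertex by a small planar gadget of forbidden routing vertices joined by zero-cost (or negligibly cheap) arcs. Arc travel times must stay positive, so we first scale all times by a large factor $N$ (say $N=2n'$, where $n'$ is the final number of vertices). We set $H=\Delta=N\bar L$ and give every new routing arc travel time $1$. Any simple path then gains at most $n'-1<N$ from routing arcs. Because all original scaled lengths are multiples of $N$, a path of original length below $\bar L$ keeps length below $N\bar L$ after the extra routing time is added, and a path of length at least $\bar L$ stays at least $N\bar L$. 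This is the same integrality trick as in \eqref{eq:dspeip-scaling}.

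\textbf{The gadget.} Each auxiliary vertex $q_a^j$ already has total degree two, so only original vertices $x\in V$ need treatment, and these are all in $F$. Fix the planar embedding, and list the incident arcs of $x$ in cyclic order. Partition them into the incoming arcs and the outgoing arcs. In a planar embedding the two kinds may interleave cyclically, so a simple in-tree plus out-tree is not automatically planar.

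\textbf{Handling interleaving.} Replace $x$ by a directed cycle of forbidden vertices $c_1\to c_2\to\cdots\to c_m\to c_1$, one cycle vertex per incident arc, placed in the embedding order. Attach each original arc to its own cycle vertex: an incoming arc enters $c_i$ and an outgoing arc leaves $c_i$. Every cycle vertex then has total degree three: one cycle arc in, one cycle arc out, and one original arc. The cycle lies in a small disk around $x$, so planarity is preserved. Since the cycle is directed and strongly connected, every incoming arc can reach every outgoing arc within the gadget, at routing cost at most $m-1$ arcs of time $1$.

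\textbf{Correctness.} Every new vertex is in $F$ and has value $0$, and the target $v$ keeps value $1$. If $v$ itself has degree above three, we choose a gadget vertex with an incoming arc to carry $w_v=1$; by the cycle structure, its arrival time differs from that of the other gadget vertices only by routing time. The set of protectable vertices, namely the $q_a^j$, is unchanged, together with their delays. Undelayed $u$--$v$ paths in the new graph correspond to paths in $\widehat G$ through the same unprotected auxiliary vertices, and conversely; lengths change only by the routing slack, which is less than $N$. The scaling argument above therefore shows that the objective is zero in one instance iff it is zero in the other. The construction is polynomial, with polynomially bounded numbers.

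\textbf{Main obstacle.} The delicate step is the timing argument: the added routing lengths must not push a short path past $H$, nor bring a long path below it. We handle this by choosing $N$ larger than the total routing slack any simple path can accumulate.
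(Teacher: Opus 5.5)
Your proof is correct, but it takes a different route from the paper's. The paper replaces each high-degree original vertex $x$ by a forbidden binary in-tree that merges the incoming arcs into a central vertex $x^*$, followed by a forbidden binary out-tree. It subdivides the out-tree so that every root-to-leaf path has weight exactly $\varepsilon/2$. It gives the in-tree arcs tiny positive weights whose total on any simple path is below one. It then uses the gap of two around $\bar L$ left by \eqref{eq:dspeip-scaling}, so $H$ and $\Delta$ stay unchanged. Your timing argument is the same idea in another form: unit routing arcs after multiplying everything by $N$ is equivalent to routing weights $1/N$, and you argue via integrality of the unscaled lengths instead of the explicit gap.

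The real difference is the gadget, and your interleaving remark points to a genuine weakness of the paper's version. Suppose in-arcs $i_1,i_2$ and out-arcs $o_1,o_2$ occur around $x$ in cyclic order $i_1,o_1,i_2,o_2$. The $i_1$--$i_2$ path in the in-tree and the $o_1$--$o_2$ path in the out-tree could meet only at $x^*$. That would give $x^*$ degree at least four, so under the degree bound the two paths are disjoint. Disjoint paths joining interleaved boundary points of a disk must cross, so the tree gadget cannot be drawn planarly in a small disk around $x$ respecting the rotation. The paper's construction therefore works only for bimodal rotations, which it does not establish for the instances of Theorem~\ref{thm:wwsp-planar}.

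Your directed cycle, with one vertex per incident arc in rotation order, works for every rotation system and keeps all degrees at three. The cost is rescaling $H$, $\Delta$ and all travel times. You also lose the exact weight $\varepsilon/2$ on the arcs into the $q_a^j$, which the paper's version preserves.

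One small omission: you treat the target $v$ explicitly but not the ignition vertex. If $u$ has degree above three, ignite an arbitrary vertex of its cycle; your bound $n'-1<N$ on the routing time along simple paths covers this case unchanged.
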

\begin{proof}
Only the forbidden original vertices can have high degree. Replace each such vertex $x$ by a forbidden binary in-tree merging its incoming arcs into a central vertex $x^*$ and a forbidden binary out-tree distributing the outgoing arcs. The trees are embedded according to the cyclic order of the incident arcs, preserving planarity. The auxiliary vertices $q_a^j$ are left unchanged.

All arcs from an original vertex to an auxiliary vertex $q_a^j$ have the common weight $\varepsilon/2$. Subdivide the out-tree to equal depth and assign positive weights so that each root-to-leaf path has total weight $\varepsilon/2$. Give the in-tree arcs sufficiently small positive rational weights that the total perturbation of any simple path is below one. The scaling in \eqref{eq:dspeip-scaling} leaves a gap of at least two around $\bar L$, so this perturbation cannot change the answer. Every introduced vertex is forbidden and every vertex has total degree at most three.
\end{proof}

\begin{proposition}[Grid embedding]\label{lem:wwsp-grid-embed}
A WWSP instance on a planar directed graph of maximum degree at most four can be transformed into an equivalent instance on a full weighted directed grid.
\end{proposition}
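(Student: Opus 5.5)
We embed the graph orthogonally in the lattice and rely on the forbidden set to make every routing cell inert, then fix all arc weights so that every path in the embedded instance has essentially the same length as the corresponding path in the original graph.

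\emph{Step 1: drawing.} Let $\widehat G$ be the planar directed graph of maximum degree at most four, with underlying undirected simple graph $U$. Parallel and antiparallel arcs can be handled by subdividing them with forbidden vertices. Apply the orthogonal drawing of \textcite{Biedl.Kant/1998} to $U$. This gives an $O(n)\times O(n)$ layout in which vertices are cells and edges are grid-vertex-disjoint lattice paths (channels) meeting only at shared endpoints. Scale the drawing by a constant factor, say $3$, so that distinct channels are never horizontally or vertically adjacent except at their common endpoint cells.

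\emph{Step 2: weights.} For an arc $xy$ of $\widehat G$ with weight $\tau_{xy}$, direct its channel from $x$ to $y$ and split $\tau_{xy}$ into positive rational weights along the channel's arcs. If $x$ is an auxiliary vertex $q_a^j$ whose outgoing arc must receive the delay, put the whole portion starting at $x$'s cell on that channel. This works because $x$'s cell is the channel's first cell, so the delay on $x$'s outgoing grid arcs only delays arcs leaving $x$. Give every other grid arc a huge weight $W>H+\Delta$. This covers reverse channel arcs, arcs between non-consecutive cells, and arcs into or out of unused cells.

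\emph{Step 3: vertex data.} Make all channel cells and unused cells forbidden and give them value $0$. Keep the original values, the forbidden status, $k$, $\Delta$ and $H$.

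\emph{Step 4: equivalence.} Within the horizon, fire moves only along channels in their forward direction. So the arrival time at the cell of each original vertex equals its arrival time in $\widehat G$, under the identical allocation. The protectable cells are exactly the non-forbidden original vertices, and protecting such a cell delays precisely the images of its outgoing arcs. A degree-four vertex has at most four outgoing grid directions, each used by one channel, so it remains well defined which of its grid arcs receive the delay. Since routing cells carry value $0$, the objective values coincide.

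\emph{Obstacles.} The main subtlety is that grid arcs adjacent to a protectable vertex's cell but leaving the drawing could open shortcuts. These arcs carry weight $W$, and because $W$ exceeds $H$ no path using one of them has length below $H$, so they do not matter; the same bound makes their extra delay $\Delta$ irrelevant. We must also check that zero-value vertices burning is harmless. Finally, all weights must stay rational and polynomially bounded: channel lengths are $O(n)$, so splitting each $\tau_{xy}$ evenly multiplies denominators by at most $O(n)$, and we can rescale everything to integers.
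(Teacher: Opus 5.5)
Your proposal is correct and follows the paper's route. You use a planar orthogonal drawing after Biedl and Kant, split each arc's travel time over the forward arcs of its channel, give every other grid arc weight at least $H$, and make all routing cells forbidden, so that protecting a vertex cell delays exactly its outgoing channels. Your explicit zero values for routing cells and your subdivision of antiparallel arcs are useful details that the paper leaves implicit. The factor-$3$ scaling is harmless but unnecessary, since arcs between neighbouring channels are already heavy.

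One small caveat concerns your closing remark. An even split followed by rescaling \emph{everything} to integers need not keep the numbers polynomially bounded, because the lcm of $O(n)$ distinct channel lengths can be exponential. You do not need integers, though: the grid model allows polynomially bounded rationals. Alternatively, you can scale by the maximum channel length and split unevenly.
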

\begin{proof}
Take a planar orthogonal grid drawing~\parencite{Biedl.Kant/1998}, representing each arc by a non-crossing lattice channel. Split the arc's positive travel time among the forward channel arcs, give all other grid arcs weight at least $H$, and declare every channel-interior and unused cell forbidden. A cell representing a forbidden original vertex remains forbidden, while a cell representing a protectable vertex remains protectable. Protecting such a cell delays all its outgoing channels by $\Delta$, exactly reproducing protection in the planar instance. All other cells either cannot be protected or cannot be reached before $H$, so allocations and objective values correspond.
\end{proof}

\begin{theorem}\label{thm:wwsp-grid}
  The WWSP is strongly NP-complete on full weighted directed grids.
\end{theorem}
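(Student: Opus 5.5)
The plan is to compose the three preceding results. Starting from a planar DSPEIP instance, Theorem~\ref{thm:wwsp-planar} produces an equivalent planar WWSP instance on $\widehat G$. Proposition~\ref{lem:wwsp-degree} turns it into an equivalent planar instance of maximum total degree three. Proposition~\ref{lem:wwsp-grid-embed} then embeds that instance into a full weighted directed grid. Membership in NP was already noted at the start of the section, so only hardness and the strength of the reduction remain.

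First, check that the output of Proposition~\ref{lem:wwsp-degree} meets the hypothesis of Proposition~\ref{lem:wwsp-grid-embed}. Total degree three bounds the number of incident grid directions by three, hence by four, so the orthogonal drawing of \textcite{Biedl.Kant/1998} applies to the underlying undirected planar graph. Antiparallel arcs do not arise, since every arc of $\widehat G$ has an auxiliary vertex $q_a^j$ as one endpoint and the added trees are oriented. I would also confirm that the drawing places each protectable vertex $q_a^j$ in its own cell. Its two incident arcs leave through distinct channels, and protecting that cell delays exactly the outgoing channel toward $y$, as in the planar instance.

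Second, verify the numerical side. The drawing has polynomial size, so there are polynomially many channel arcs. Each arc weight is split over the arcs of its channel, and all non-channel arcs get weight $H=\bar L$, which is polynomially bounded. The split weights are positive rationals whose common denominator is polynomial; multiplying everything, including $H$ and $\Delta$, by that denominator restores integrality. It also preserves the slack of at least one used in Proposition~\ref{lem:wwsp-degree}, so the weights stay polynomially bounded. The target value $w_v=1$ is carried by the cell of $v$ (more precisely, the cell of the central vertex $v^*$ of its in-tree, or the tree's terminal cell; I would choose $v^*$). The bound remains $b_{\mathrm{WWSP}}=0$.

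The main obstacle is this bookkeeping around the value-carrying vertex and around cells reached through the heavy "other" grid arcs. Arrivals via non-channel arcs are at least $H$, so any such path does not burn before the horizon, and correspondence of objective values follows exactly as in the proof of Proposition~\ref{lem:wwsp-grid-embed}. Chaining the three equivalences then gives a polynomial reduction with polynomially bounded numbers from a strongly NP-complete problem, which proves the theorem.
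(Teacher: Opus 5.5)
Your proposal is correct and matches the paper's proof, which chains Theorem~\ref{thm:wwsp-planar} with Propositions~\ref{lem:wwsp-degree} and~\ref{lem:wwsp-grid-embed} and notes that both transformations are polynomial and use polynomially bounded rational weights. Your extra bookkeeping is detail the paper leaves implicit: the degree bound for the orthogonal drawing, and placing the value $1$ on $v^*$. The same holds for scaling to integers, but there the channel split must be chosen so that the common denominator stays polynomial, e.g.\ a fixed small unit on all but one arc of each channel; an even split could make the least common multiple of the channel lengths superpolynomial.
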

\begin{proof}
Apply Propositions~\ref{lem:wwsp-degree} and~\ref{lem:wwsp-grid-embed} to the hard planar instances of Theorem~\ref{thm:wwsp-planar}. Both transformations are polynomial and use only polynomially bounded rational weights.
\end{proof}

\subsection{The HWSP: homogeneous propagation}

The model of \textcite{Hof/2000} has vertex-dependent suppression delays but requires homogeneous propagation from each vertex. The latter precludes modeling directional influences like wind. Identical to \textcite{Wei/2011} all resources are released at time zero. The objective is to maximize the earliest fire arrival time to a given set of target vertices.

\begin{definition}[Homogeneous Wildfire Suppression Problem, HWSP]\label{def:hwsp}
Given a directed graph $G=(V,A)$ with positive travel times satisfying $t_{uv}=t_{uw}$ for all $v,w\in N^{+}_u$, an ignition vertex $s$, a target set $D$, $k$ resources available at time zero, and delays $\Delta_v\geq0$, the HWSP maximizes
\[
 \min_{v\in D}a_v^{\A}.
\]
Its decision version asks whether an allocation exists for which this value is at least a given threshold $b_{\mathrm{HWSP}}$.
\end{definition}

The graph $\widehat G$ used above already satisfies the homogeneity condition. Every outgoing arc of an original vertex has travel time $\varepsilon/2$, while every auxiliary vertex $q_a^j$ has only one outgoing arc. Vertex-dependent delays can now play the role of the WWSP forbidden set.

\begin{theorem}\label{thm:hwsp-planar}
  The HWSP is strongly NP-complete on planar graphs.
\end{theorem}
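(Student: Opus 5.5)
We reduce from planar DSPEIP exactly as in Theorem~\ref{thm:wwsp-planar}, reusing the normalized instance and the planar graph $\widehat G$ built from it. Membership in NP is immediate: given an allocation, one shortest-path computation gives all arrival times $a^{\A}$, and hence the objective $\min_{v\in D}a_v^{\A}$. For hardness, the HWSP instance uses ignition $s=u$, target set $D=\{v\}$, $k=B$ resources, and threshold $b_{\mathrm{HWSP}}=\bar L$. The forbidden set is simulated by vertex-dependent delays: put $\Delta_x=0$ for every original vertex $x\in V$, and $\Delta_{q_a^j}=\bar L$ for every auxiliary vertex. As noted before the statement, $\widehat G$ satisfies the homogeneity condition of Definition~\ref{def:hwsp}. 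Every original vertex has all outgoing arcs of weight $\varepsilon/2$, and every auxiliary vertex has a single outgoing arc. Planarity and the polynomial size of all numbers carry over from the earlier construction.

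\emph{Forward direction.} Let $I$ be a feasible interdiction. Protect all $q_a^j$ with $a\in I$; this uses $c(I)\leq B$ resources. Take any $u$--$v$ path in $\widehat G$. If it passes through a protected auxiliary vertex, it picks up delay $\bar L$ and so has length at least $\bar L$. Otherwise it projects onto a path in $G\setminus I$, whose scaled length is at least $\bar L$ by \eqref{eq:dspeip-scaling}. Hence $a_v^{\A}\geq\bar L$.

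\emph{Converse.} Let $\A$ achieve $a_v^{\A}\geq\bar L$. Resources placed on original vertices have zero delay, so they change nothing and can be discarded. Define $I(\A)$ as in the proof of Theorem~\ref{thm:wwsp-planar}, namely the arcs all of whose auxiliary copies are protected. Then $c(I(\A))\leq B$. Suppose $G\setminus I(\A)$ had a $u$--$v$ path of original length below $L$. For each of its arcs, route through an unprotected auxiliary copy. This gives an undelayed path in $\widehat G$ of length equal to the scaled length, which is below $\bar L$ by \eqref{eq:dspeip-scaling}. That contradicts $a_v^{\A}\geq\bar L$.

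The main point to check is that a zero delay really makes a vertex equivalent to a forbidden one. It does, because protection then changes no arc weight; it only uses up a resource, which can never help. A second check is that the HWSP objective measures arrival with the non-strict threshold $\geq$. The WWSP burn criterion was $a_v<H$, and the scaling gap in \eqref{eq:dspeip-scaling} already provides exactly the $\geq\bar L$ versus $<\bar L$ dichotomy needed here. Finally, if the definition requires positive delays, we can instead set $\Delta_x$ to a tiny positive value. Its total effect is below one over at most $k$ protected vertices, and the gap of two around $\bar L$ absorbs this, just as in Proposition~\ref{lem:wwsp-degree}.
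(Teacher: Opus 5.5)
Your proposal is correct and follows the paper's proof essentially step for step. It uses the same instance $\widehat G$ with $s=u$, $D=\{v\}$, $k=B$, $b_{\mathrm{HWSP}}=\bar L$, delays $\bar L$ on auxiliary vertices and $0$ on original vertices, and the same lifting argument through one unprotected auxiliary vertex per arc for the converse. Your closing perturbation remark about positive delays is sound but unnecessary, since Definition~\ref{def:hwsp} already allows $\Delta_v\geq0$.
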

\begin{proof}
Starting from the normalized planar DSPEIP instance, use $\widehat G$, ignition $s = u$, target set $D = \{v\}$, $k = B$, and threshold $b_{\mathrm{HWSP}} = \bar L$. Define
\[
 \Delta_x =
 \begin{cases}
  \bar L,&x=q_a^j\text{ for some }a,j,\\
  0,&x\in V.
 \end{cases}
\]
Protecting an original vertex has no effect, so an allocation may be assumed to protect only auxiliary vertices $q_a^j$. If a DSPEIP solution $I$ exists, protect all auxiliary vertices associated with every $a\in I$. Exactly as in Theorem~\ref{thm:wwsp-planar}, every $u$--$v$ path then has length at least $\bar L$, and hence $a_v^{\A}\geq b_{\mathrm{HWSP}}$.

Conversely, from an allocation with $a_v^{\A}\geq\bar L$, take the arcs all of whose associated auxiliary vertices are protected. Their total interdiction cost is at most $B$. A path of original length below $L$ avoiding these arcs could be lifted through one unprotected auxiliary vertex per arc, producing an undelayed path of length below $\bar L$, a contradiction. Planarity, polynomial size, and polynomially bounded numbers follow from the WWSP construction.
\end{proof}

\section{A new Mixed-Integer Programming Formulation} \label{sec:mip_formulation}

In this section, we present a MIP formulation for the WSP. For any vertex $v \in V$, let the continuous variable $a_v \in \mathbb{R}_{\geq 0}$ denote the fire arrival time at $v$. The binary variable $y_v \in \{0, 1\}$ indicates whether $v$ burns before the horizon $H$. For each resource release point $i \in [T]$, the binary variable $r_{iv} \in \{0, 1\}$ determines whether vertex $v$ receives a resource released at time $t_i$. The MIP model is as follows.

\begin{align*}
  \text{min.} \quad & \sum_{v \in V} y_v \tag{M.1} \label{mg:obj} \\
  \text{s.t.} \quad
  & a_s = 0, &&  \tag{M.2} \label{mg:source} \\
  & a_v \leq a_u + t_{uv} + \Delta \sum_{t \in T} r_{tu}, &&  uv \in A, \tag{M.3} \label{mg:spread} \\
  & \sum_{v \in V} r_{iv} \leq |R_{t_i}|, &&  i \in [T], \tag{M.4} \label{mg:res_limited} \\
  & \sum_{i \in [T]} r_{iv} \leq 1, &&  v \in V, \tag{M.5} \label{mg:res_unique} \\
  & a_v \geq r_{iv} t_i, &&  i \in [T],\; v \in V, \tag{M.6} \label{mg:res_active} \\
  & y_v \geq 1 - a_v / H, &&  v \in V, \tag{M.7} \label{mg:burned_def} \\
  & y_v, r_{iv} \in \{0,1\}, &&  v \in V,\; i \in [T],  \\
  & a_v \in \mathbb{R}_{\geq 0}, &&  v \in V.
\end{align*}

Constraints \eqref{mg:source} and \eqref{mg:spread} define fire propagation using the structure of a shortest-path problem. The ignition time $a_s=0$ anchors the fire spread, while \eqref{mg:spread} enforces that the arrival time at any vertex $v$ cannot exceed the arrival time at a predecessor plus the travel and delay costs.  Because of the objective \eqref{mg:obj} and constraint \eqref{mg:burned_def}, fire arrival times $a_v$ are incentivized to increase (to allow $y_v$ to drop to zero).  Resource constraints are handled by \eqref{mg:res_limited}--\eqref{mg:res_active}. Constraint~\eqref{mg:res_limited} restricts the allocation based on available resources $|R_i|$ at each release time $t_i$, and \eqref{mg:res_unique} ensures each vertex is protected at most once. Constraint~\eqref{mg:res_active} enforces temporal feasibility: a resource can only be allocated to vertex $v$ if it is available at time $t_i$ before the fire arrives. Finally, \eqref{mg:burned_def} determines the burning status $y_v$ relative to the planning horizon $H$. The objective~\eqref{mg:obj} minimizes the total number of burned vertices.

\paragraph{A note on MIP models}

\textcite{Alvelos/2018} proposed the first MIP for the WSP, utilizing linear programming duality to enforce exact fire arrival times at each vertex. While mathematically rigorous, this approach is computationally demanding; previous work by \textcite{Harris/2022} and \textcite{Mendes/2022} report that the model is hard to solve even for small instances.

Prior to \textcite{Alvelos/2018}, earlier work on fire spread in graphs by \textcite{Hof/2000} and \textcite{Wei/2011} modeled propagation using simpler inequalities of the form $a_v \leq a_u + t_{uv}$ for each arc $uv \in A$. These constraints do not strictly enforce equality between $a_v$ and the shortest path distance. However, in these models, arrival times primarily support decision variables that track whether a vertex burns within the time horizon $H$. Because objective $ \text{min.}\; \sum_v y_v$ and constraints $y_v \geq 1 - a_v/H$ incentivize the model to increase $a_v$, the resulting burning status $y_v$ remains correct.

While these earlier works lacked the time-constrained suppression features introduced by \textcite{Alvelos/2018}, our formulation bridges these two approaches.  We combine efficient propagation constraints by \textcite{Hof/2000} and \textcite{Wei/2011} with the temporal resource constraints of \textcite{Alvelos/2018}. As demonstrated in the experimental section, the resulting formulation outperforms Alvelos model, and remains competitive, and in some cases superior, to state-of-the-art specialized algorithms.

\section{Rothermel's Model for Surface Fire Spread: A Primer} \label{sec:rothermel}

In 1971, \textcite{Frandsen/1971} proposed a theoretical model for fire spread based on the principle that fire propagates as a series of ignitions, i.e., unburned fuel is heated by the approaching fire front until it reaches ignition temperature. Figure~\ref{fig:rothermel} illustrates this process, where a unit volume of fuel $\Delta V$ (a representative volume of the fuel bed that retains the same properties of the surrounding fuel) is heated by an advancing fire. Using the principle of conservation of energy, \textcite{Frandsen/1971} derived the rate of spread (ROS) as the ratio between the energy released by the fire as it approaches the volume and the energy required to ignite it.

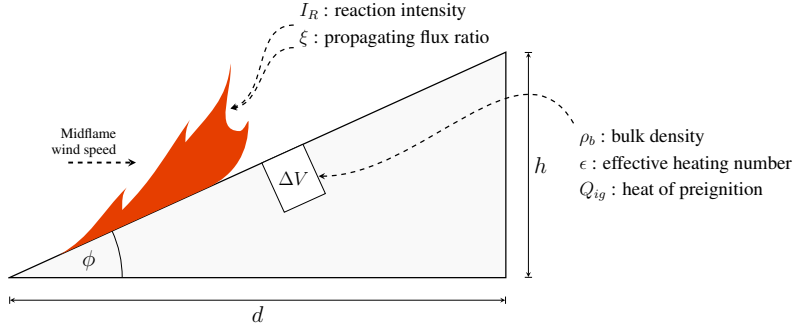
\begin{figure}[h]
  \centering
  \begin{tikzpicture}[
    >=stealth,
    thick,
    font=\small
]

\def\lenD{11} 
\def\lenH{5}  
\coordinate (A) at (0, 0);
\coordinate (B) at (\lenD, 0);
\coordinate (C) at (\lenD, \lenH);

\pgfmathsetmacro{\slopeangle}{atan(\lenH/\lenD)}

\draw[fill=gray!5] (A) -- (B) -- (C) -- cycle;

\draw[|<->|, thin] ($(A)-(0,0.5)$) -- node[below] {\Large $d$} ($(B)-(0,0.5)$);
\draw[|<->|, thin] ($(B)+(0.5,0)$) -- node[right] {\Large $h$} ($(C)+(0.5,0)$);

\draw[thin] ($(A)+(2.5,0)$) arc (0:\slopeangle:2.5);
\node at ($(A)+(\slopeangle/2:1.8)$) {\Large $\phi$};


\begin{scope}
    \definecolor{flameorange}{RGB}{255, 69, 0}
    \definecolor{flameyellow}{RGB}{255, 215, 0}

    \path[fill=flameorange!90!black, draw=none, very thick, line join=round]
        ($(A)!0.1!(C)$) coordinate (fstart)
        to[out=30, in=225] ++(1.5, 1.5) coordinate (tip1)
        to[out=235, in=90] ++(-0.1, -0.5)
        to[out=45, in=235] ++(1.5, 2.0) coordinate (tip2)
        to[out=245, in=90] ++(-0.1, -0.5)
        to[out=50, in=265] ++(1.0, 1.75) coordinate (tip3)
        to[out=275, in=180] ++(0.2,-1.5)
        to[out=0, in=90] ++(0.2,0.2)
        to[out=-90, in=25] ($(A)!0.4!(C)$)
        -- cycle;
\end{scope}

\node[anchor=east, align=right] (windLabel) at (2.5, 3) {Midflame\\wind speed};
\draw[dashed, ->, very thick] (windLabel.south west) ++(0.5,0) -- ++(1.5,0);

\coordinate (BoxAnchor) at ($(A)!0.55!(C)$);
\begin{scope}[rotate around={\slopeangle:(BoxAnchor)}]
    \draw[fill=white, thin] ($(BoxAnchor)-(0.5,1.2)$) rectangle ($(BoxAnchor)+(0.5,0)$);
    \node at ($(BoxAnchor)-(0,0.6)$) {\large $\Delta V$};
\end{scope}

\node[align=left, anchor=west] (topParams) at (\lenD-4.7, \lenH+0.6)
    {\large $I_R$ : reaction intensity \\[5pt] \large $\xi$ : propagating flux ratio};

\draw[dashed, ->] (topParams.west) to[out=180, in=20] ($(tip3)-(0,1)$);
\draw[dashed, ->] ($(topParams.west)-(0,0.5)$) to[out=180, in=10] ($(tip3)-(0,1)$);

\node[align=left, anchor=west] (bottomParams) at (\lenD+1.5, 2.5)
    {\large $\rho_b$ : bulk density \\[5pt]
     \large $\epsilon$ : effective heating number \\[5pt]
     \large $Q_{ig}$ : heat of preignition};

\draw[dashed, ->] (bottomParams.north west) to[out=120, in=0] ($(BoxAnchor)-(-0.8,0.5)$);

\end{tikzpicture}
  \caption[Rothermel's fire spread model.]{Rothermel's fire spread model. As the fire front advances, the unit volume $\Delta V$ is heated until it ignites.} \label{fig:rothermel}
\end{figure}

Frandsen's model contains terms that are difficult to derive from first principles, which hinders its applicability in real settings. In 1972, \textcite{Rothermel/1972} addressed this by decomposing Frandsen's equation into a few parameters that could be estimated through laboratory experiments using artificial fuel beds and wind tunnels. Rothermel's model is one of the most widely used tools in the prevention and combat of wildfires.

Rothermel's model defines the ROS (in ft/min) of a fire as
\begin{equation} \label{eq:background:rothermel}
  R = \frac{I_R \xi}{\rho_b \epsilon Q_{ig}} (1 + \Phi_w + \Phi_s).
\end{equation}
The variables in Equation~\ref{eq:background:rothermel} are determined by the characteristics of the fuel particles (i.e.~a discrete unit of combustible material, such as a leaf of grass or a piece of wood), the fuel bed (i.e.~a collection of fuel particles), and the environment. This equation is best understood as the product of two components: the base rate of spread under no-wind, no-slope conditions $R_0= \frac{I_R \xi }{\rho_b \epsilon Q_{ig}}$ (ft/min) and a dimensionless slope-and-wind correction factor $(1 + \Phi_w + \Phi_s)$. We can thus reformulate
\begin{equation}\label{eq:background:rothermelReformulation}
  R = R_0(1 + \Phi_w + \Phi_s).
\end{equation}

Rothermel's original model assumes that fire spreads in the direction of both wind and slope, a scenario known as \emph{upslope headfire}. More generally, a fire spreading in the same direction as the wind is a \emph{headfire}, otherwise, a \emph{backfire}. Similarly, fire moving uphill is \emph{upslope}, otherwise, \emph{downslope}. This gives four possible scenarios: upslope headfire, downslope headfire, upslope backfire, and downslope backfire.

To account for these scenarios, several extensions of Rothermel's model have been proposed. In this work, we adopt the formulation introduced by Albini and described by \textcite{Weise/1997}. Among the alternatives compared in their study, Albini's extension showed the best agreement with observed data. The rate of spread $R$ according to Albini's model is given by $R = R_0r$, where multiplier $r$ depends on the wind and slope scenario
\begin{equation}
r =
\left\{
\begin{array}{ll}
  1 + \Phi_w + \Phi_s,            & \quad \text{Upslope headfire;}   \\
  1 + \max\{0, \Phi_w - \Phi_s\}, & \quad \text{Downslope headfire;} \\
  1 + \max\{0, \Phi_s - \Phi_w\}, & \quad \text{Upslope backfire;}   \\
  1,                              & \quad \text{Downslope backfire.}
\end{array}
\right.
\end{equation}
In the following sections, we explain each component of Equation~\ref{eq:background:rothermelReformulation} in detail.

\subsection{No-wind, no-slope Rate of Spread}

The no-wind, no-slope rate of spread $R_0$ (ft/min) characterizes how quickly fire propagates through a flat terrain in the absence of wind. It is defined as $$R_0 = \frac{I_R \xi }{\rho_b \epsilon Q_{ig}}.$$

The \emph{reaction intensity} $I_R$ (Btu/ft$^2$/min) represents the rate at which energy is released per unit area of the fire front. This value depends on the amount of fuel available to burn and the physical and chemical properties of the fuel particles. For instance, a dense layer of fallen branches and logs burns with high intensity, releasing a large amount of energy in a short period, resulting in a high $I_R$. In contrast, a cover of dry grass burns more gradually, producing a lower $I_R$.

The \emph{propagating flux ratio} $\xi$ is a dimensionless coefficient that quantifies how much of the reaction intensity actually contributes to igniting the unburned fuel ahead. Not all the energy released by combustion is useful for fire spread, as some portion is lost to the surroundings. A fire in a continuous fuel bed may have a higher $\xi$ than one burning over a sparse cover of shrubs because more of the heat is effectively transferred forward.

The reaction intensity $I_R$ and the propagating flux ratio $\xi$ determine the energy available for combustion, while the remaining terms of Equation~\ref{eq:background:rothermel} control how much of this energy is necessary to ignite the fuel ahead of the fire.

The \emph{bulk density} $\rho_b$ (lb/ft$^3$) represents the amount of fuel mass per unit volume of the fuel bed. gA higher bulk density means more fuel is available to burn in a given area, which can slow down the spread of fire due to reduced airflow.

The \emph{effective heating number} $\epsilon$ is a dimensionless value that represents the proportion of each fuel particle that reaches ignition temperature during combustion. In fine fuels like grass, most of the fuel particle is quickly heated, so $\epsilon$ is close to 1. In contrast, thick branches heat more slowly, leading to a lower $\epsilon$.

The \emph{heat of preignition} $Q_{ig}$ (Btu/lb) is the amount of energy required to bring a unit mass of fuel to ignition, and depends on the moisture content of the fuel particles. Dry fuels, such as dead grass or wood with low moisture content, require less energy to ignite and thus have a lower $Q_{ig}$, while wet logs need significantly more energy before they start burning.

\subsection{Slope Factor}

The slope factor $\Phi_s$ is a dimensionless quantity that accounts for the influence of terrain steepness on fire spread. When fire burns on an inclined surface, the flames tilt towards the slope, bringing them closer to the unburned fuel, which increases heat transfer and accelerates the fire spread rate. This effect is captured by the equation
\begin{align*}
  \Phi_s(A;  \beta) &= C_s(\beta) A^2,
\end{align*}
where $A = \tan \phi$ is the tangent of the slope angle, $\beta$ is the packing ratio, and $C_s$ is a function of the packing ratio defined as $C_s(\beta) = \const{a_s} \beta^{-\const{b_s}}$. Constants $\const{a_s}$ and $\const{b_s}$ are defined in Table~\ref{tab:constants}.

The \emph{packing ratio} $\beta$ is a dimensionless measure of how densely fuel particles are arranged within the fuel bed. It quantifies the trade-off between fuel availability and airflow in a fuel bed, where higher values indicate more fuel per unit volume but reduced oxygen penetration, and lower values indicate greater airflow but less fuel to sustain combustion. A low packing ratio can be found in scattered shrubs or dry grass, while a high packing ratio can be found in compact litter.

\begin{table}
  \centering
  \caption{Numerical constants used in Rothermel's model.}
  \label{tab:constants}

  \smallskip
  \begin{tabular}{lrrrrrrrrr}
    \toprule
    Constant & $\const{a_s}$ & $\const{b_s}$ & $\const{a_w}$ & $\const{b_w}$ & $\const{c_w}$ & $\const{d_w}$ & $\const{e_w}$ & $\const{f_w}$ & $\const{g_w}$        \\
    Value    & $5.275$       & $0.3   $      & $7.47   $     & $0.133  $     & $0.55   $     & $0.02526$     & $0.54   $     & $0.715  $     & $3.59\times 10^{-4}$ \\
    \bottomrule
  \end{tabular}
\end{table}

\subsection{Wind Factor}

The wind factor $\Phi_w$ is a dimensionless quantity that quantifies how wind enhances the spread rate of fire. As wind speed increases, flames are tilted forward into the unburned fuel, preheating and igniting it more quickly. This effect is captured by the equation
$$\Phi_w(U; \sigma,  \beta_{rel}) = C_w(\sigma, \beta_{rel})U^{B_w(\sigma)},$$
where $U$ is the mid-flame wind speed (ft/min), representing the wind velocity at the level of the flames, and $\beta_{rel}$ and $\sigma$ are parameters that depend on the fuel type. Functions $C_w$ and $B_w$ are defined as
\begin{align*}
  C_w(\sigma, \beta_{rel}) &= (\const{a_w} e^{-\const{b_w} \sigma^{\const{c_w}}}) (\beta_{rel}^{-\const{d_w} e^{-\const{e_w} \sigma}}), \\
  B_w(\sigma) &=  \const{f_w} \sigma^{\const{g_w}},
\end{align*}
where $\const{a_w}$, $\const{b_w}$, $\const{c_w}$, $\const{d_w}$, $\const{e_w}$, $\const{f_w}$, and $\const{g_w}$ are constants defined in Table~\ref{tab:constants}.

The \emph{surface-area-to-volume ratio} $\sigma$ (ft$^2$/ft$^3$) characterizes the fineness of the fuel particles. Fine fuels such as dry grass have high values of $\sigma$, whereas logs and thick branches have much lower values. Earlier, we defined the packing ratio $\beta$ as a measure of how densely fuel particles are packed in the fuel bed. For any given fuel type, there exists an \emph{optimum packing ratio} $\beta_{op}$, which is the value of $\beta$ that maximizes the energy release for combustion. The \emph{relative packing ratio} $\beta_{rel} = \beta / \beta_{op}$ is the ratio between $\beta$ and $\beta_{op}$ and determines how efficiently a given fuel bed sustains fire spread.

\subsection{Fuel Models}

Rothermel's model has several parameters to estimate fire spread, such as reaction intensity, bulk density, and propagating flux ratio. These parameters can be measured empirically, but collecting data for every possible fire scenario is impractical. To address this, fire scientists classify vegetation into categories assign representative parameters to each. These predefined parameters are known as \emph{fuel models}. \textcite{Andrews/2018} compiled a list of $53$ fuel models that are widely used in fire behavior prediction. In the following section, we will utilize these models to estimate values for components such as the packing ratio $\beta$, the relative packing ratio $\beta_{rel}$, the surface-area-to-volume ratio $\sigma$, and the no-wind, no-slope rate of spread $R_0$.

\section{A New Instance Generator} \label{sec:instance_generator}

Existing instances for the WSP in the literature are no challenge for current state-of-the-art methods. For example, the beam search of \textcite{Delazeri/2024a}, finds the optimal solution for nearly all such instances in under five minutes; similarly, \textcite{Harris/2022} report that logic-based Benders decomposition can prove optimality for all existing instances within a few hours. Furthermore, these benchmarks lack physical realism, as their numerical parameters have no physical basis. Fire travel times, for instance, come without units, and are assigned randomly to each vertex, and thus are not spatially correlated. This makes it impossible to connect computational results to real-world suppression scenarios. Additionally, the literature lacks a standardized, reproducible method for generating new instances.

To address these limitations, we propose a new instance generator based on Rothermel's fire spread model, which provides a physically grounded calculation of fire travel times. This tool enables the creation of benchmarks with different grid sizes, delay values, and resource availability, while allowing for controlled variation of problem complexity. Table~\ref{tab:gfactors} summarizes the parameters and their default settings. They are detailed in the following sections.

\begin{table}[]
  \centering
  \caption[Parameters of the instance generator.]{Parameters of the instance generator. The generator supports eight experimental factors, each with a number of default levels. The table shows the values for each level, with the default value in \textbf{bold}. Parameter groups are: IS = Instance size; EF = Environmental factors; SC = Suppression capacity; RW = Release window.}
  \label{tab:gfactors}
  \small
  \setlength{\tabcolsep}{5pt} 
  \begin{tabular}{clcl clcl}
      \toprule
      Group & Factor                          & Level             & Value                    & Group & Factor                             & Level              & Value   \\
      \midrule
      \multirow{4}{*}{IS} & \multirow{4}{*}{Grid size $n$}    & Small            & $20$                     & \multirow{3}{*}{SC} & \multirow{3}{*}{Delay $\Delta$}    & Low                & $H/3$   \\
                          &                                   & \textbf{Medium}  & $30$                     &                      &                                      & Medium             & $H/2$   \\
                          &                                   & Large            & $40$                     &                      &                                      & \textbf{High}      & $H$     \\
                          &                                   & Huge             & $80$                     &                      &                                      &                    &         \\
      \midrule
      \multirow{3}{*}{IS} & \multirow{3}{*}{Decision points $T$} & Few            & $5$                      & \multirow{3}{*}{SC} & \multirow{3}{*}{Resources $k$}      & Few                & $n/2$   \\
                          &                                      & \textbf{Moderate} & $10$                   &                      &                                      & \textbf{Moderate}  & $n$     \\
                          &                                      & Many           & $20$                     &                      &                                      & Many               & $2n$    \\
      \midrule
      \multirow{3}{*}{EF} & \multirow{3}{*}{Slope $N_z$ (ft)}   & Flat           & $N_{xy}\tan(10^{\circ})$ & \multirow{3}{*}{RW} & \multirow{3}{*}{First Release Time} & \textbf{Early}     & $q(5)$  \\
                          &                                      & \textbf{Moderate} & $N_{xy}\tan(20^{\circ})$ &                     &                                       & Late               & $q(10)$ \\
                          &                                      & Steep          & $N_{xy}\tan(40^{\circ})$ &                      &                                       & Very late          & $q(20)$ \\
      \midrule
      \multirow{3}{*}{EF} & \multirow{3}{*}{Wind Speed (ft/min)} & \textbf{Light} & $[94.5,\,195.0]$         & \multirow{4}{*}{RW} & \multirow{4}{*}{Last Release Time}  & Very early         & $q(60)$ \\
                          &                                      & Moderate       & $[324.9,\,466.5]$        &                      &                                       & Early              & $q(70)$ \\
                          &                                      & Strong         & $[637.8,\,815.1]$        &                      &                                       & Late               & $q(80)$ \\
                          &                                      &                &                          &                      &                                       & \textbf{Very late} & $q(95)$ \\
      \bottomrule
   \end{tabular}
\end{table}

\subsection*{Landscapes}

A landscape is represented as a two-dimensional grid of square cells, each with an associated height. Figure~\ref{fig:instance:landscape:discretized} shows an example. We encode this landscape as a directed graph $G=(V, A)$, where each vertex corresponds to a cell, and arcs represent adjacencies between neighboring cells. Neighborhood relations are defined in the $xy$-plane, where each vertex can have up to four neighbors: two horizontal (left and right) and two vertical (top and bottom). For any vertex $v \in V$, we denote its spatial position as the tuple $(v_x, v_y, v_z) \in \mathbb{R}^3$, where $v_x$ and $v_y$ are the coordinates in the $xy$-plane and $v_z$ is the height. Figure~\ref{fig:instance:landscape:graph} illustrates this structure. The Euclidean distance between any two vertices $u, v \in V$ in the three-dimensional space is denoted by $d(u, v)$.

\begin{figure}[]
  \centering
  \begin{subfigure}{0.45\textwidth}
      \centering
      \includegraphics[width=\textwidth]{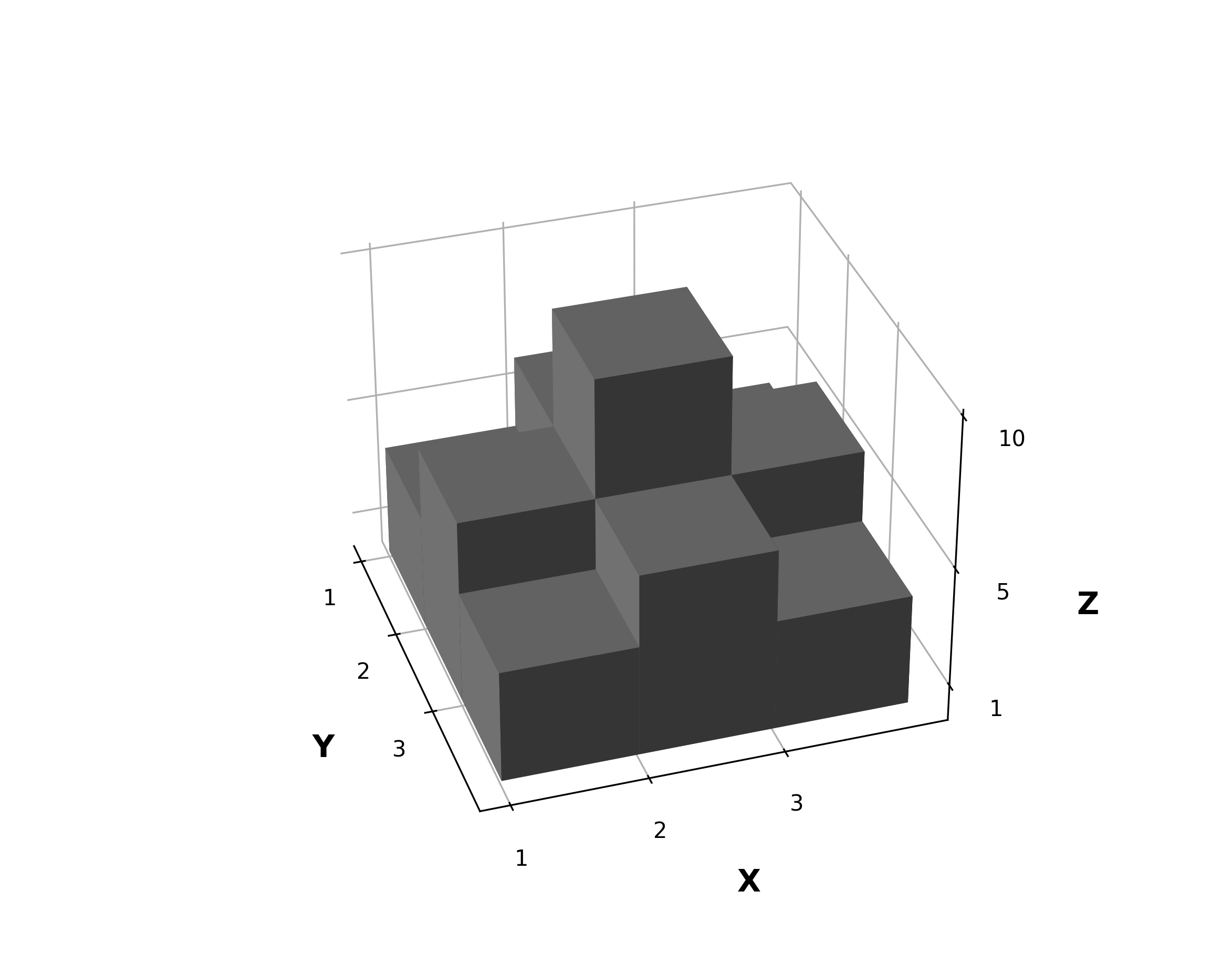}
      \caption{Landscape.}
      \label{fig:instance:landscape:discretized}
  \end{subfigure}
  \hspace{0.5cm}
  \begin{subfigure}{0.45\textwidth}
      \centering
      \includegraphics[width=\textwidth]{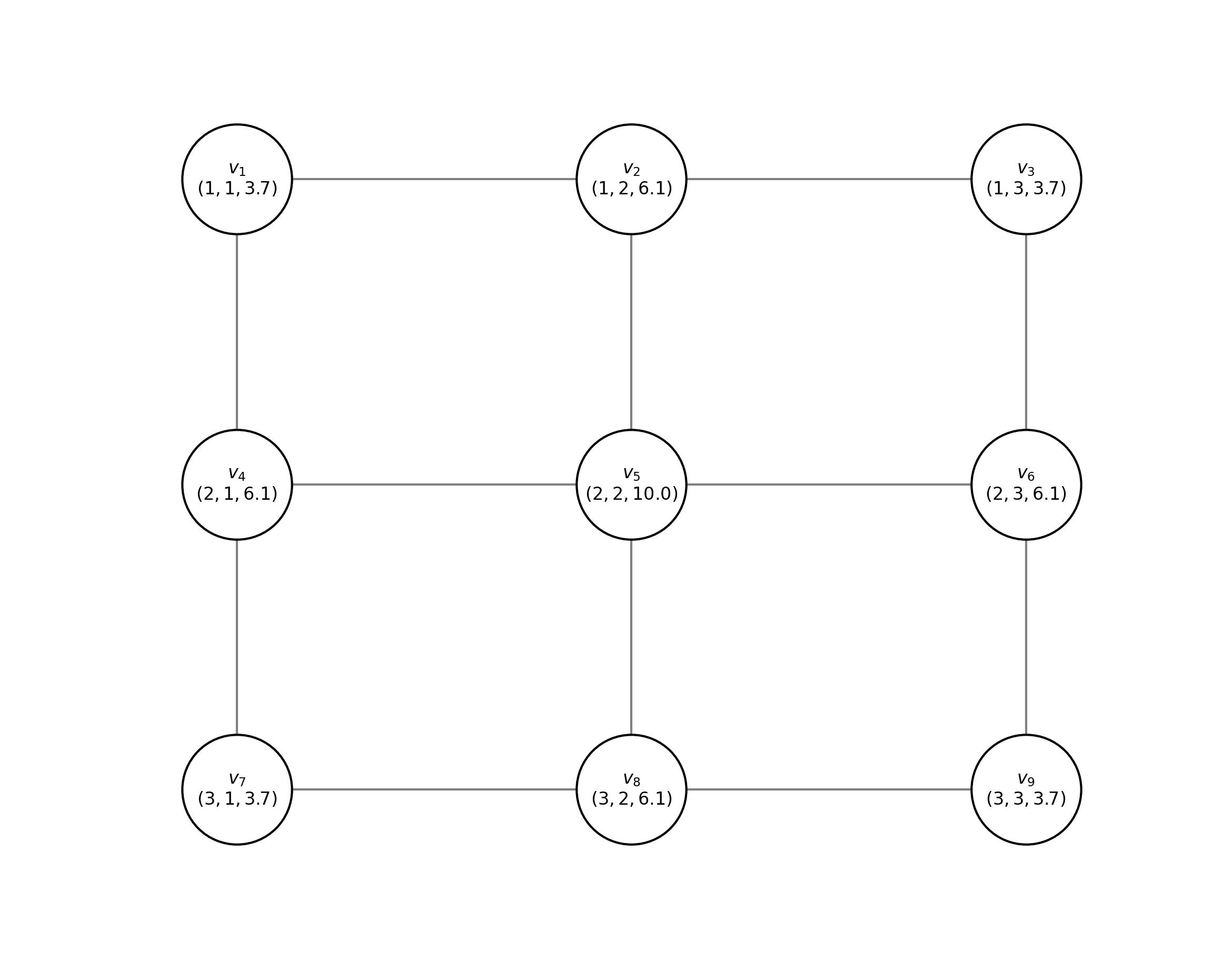}
      \caption{Graph representation.}
      \label{fig:instance:landscape:graph}
  \end{subfigure}
  \caption[Visualization of a discretized $3 \times 3$ landscape and its graph representation.]{Visualization of a discretized $3 \times 3$ landscape (left) and its graph representation (right). In the example, $N_{xy} = 3$, $N_{z} = 10$, and $d = 1$. The graph representation shows the coordinates of the vertices.}
  \label{fig:instances:landscape}
\end{figure}

We define four different grid sizes $n \times n$, where $n$ is the number of cells in each dimension in the $xy$-plane (see Table~\ref{tab:gfactors}). In all instances, the fire starts from a single ignition vertex located at the center of the grid.

The distance between two adjacent cells in the $xy$-plane is denoted by $d$. We assume that each grid represents a square landscape of approximately $N_{xy} \times N_{xy}$ square feet, where $N_{xy}$ is a parameter of the instance generation process. Given a grid size $n \times n$, the distance $d$ between two adjacent cells is defined as $d = \lceil N_{xy} / n \rceil$. Since all landscapes cover approximately the same area, the grid size determines the resolution of the landscape.

\begin{figure}[]
  \centering

  \begin{subfigure}[t]{.45\textwidth}
    \centering
    \includegraphics[width=\linewidth]{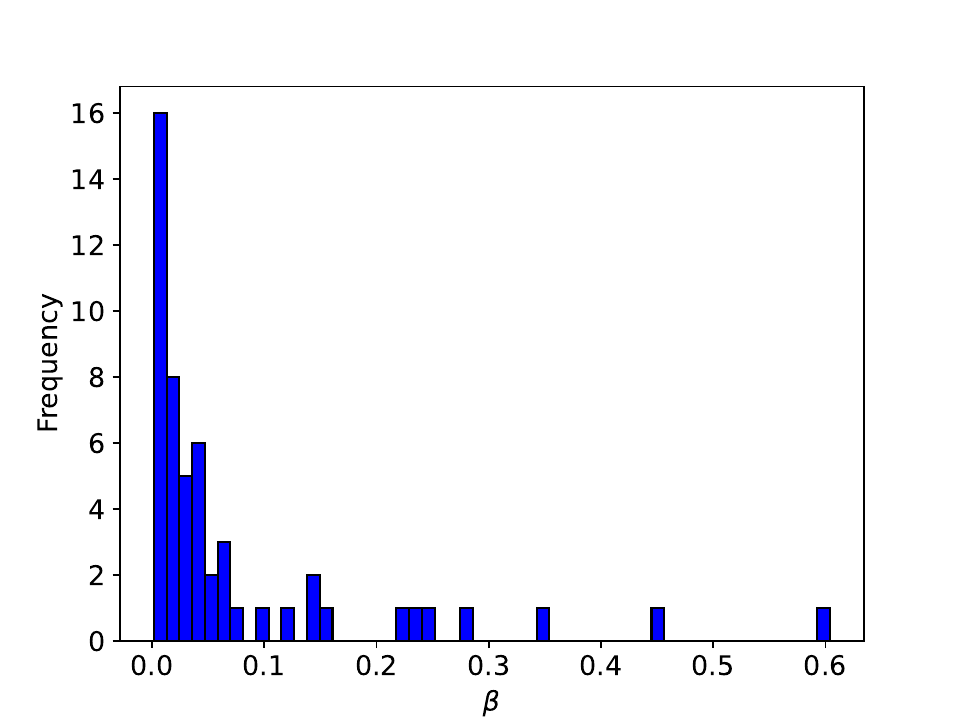}
    \caption{Packing ratio.}
    \label{fig:beta}
  \end{subfigure}\hfill
  \begin{subfigure}[t]{.45\textwidth}
    \centering
    \includegraphics[width=\linewidth]{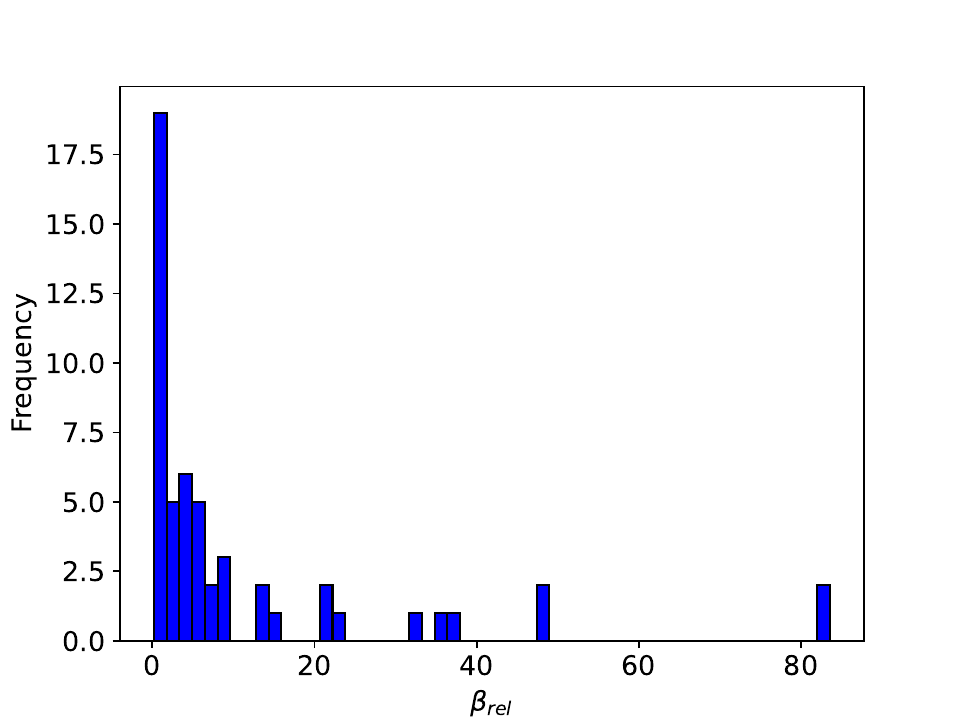}
    \caption{Relative packing ratio.}
    \label{fig:beta_rel}
  \end{subfigure}

  \medskip            

  \begin{subfigure}[t]{.45\textwidth}
    \centering
    \includegraphics[width=\linewidth]{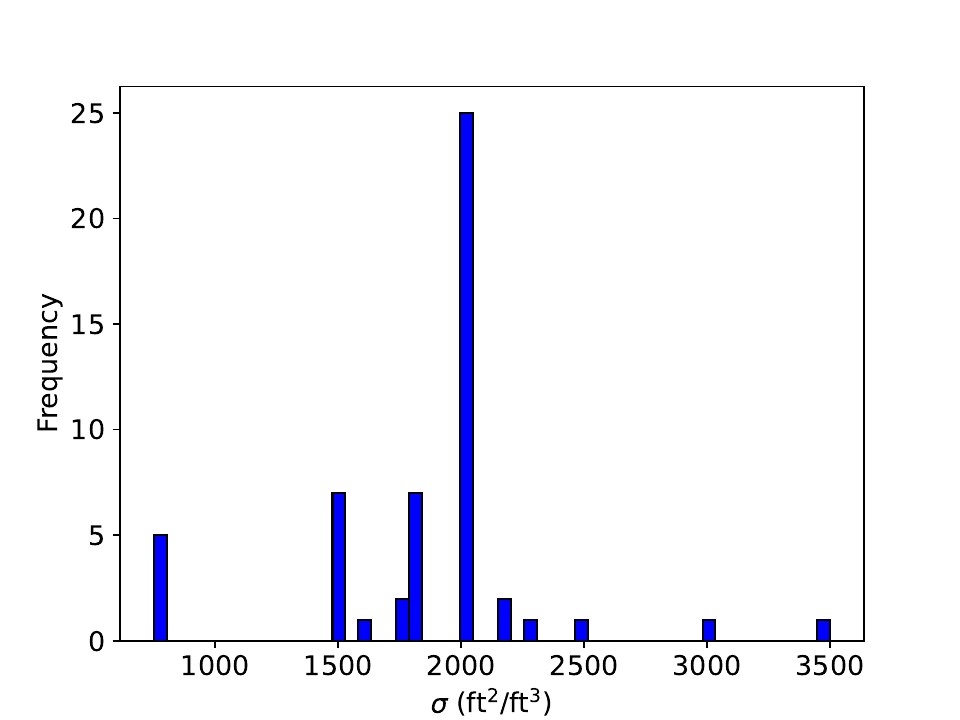}
    \caption{Surface-area-to-volume ratio.}
    \label{fig:sigma}
  \end{subfigure}\hfill
  \begin{subfigure}[t]{.45\textwidth}
    \centering
    \includegraphics[width=\linewidth]{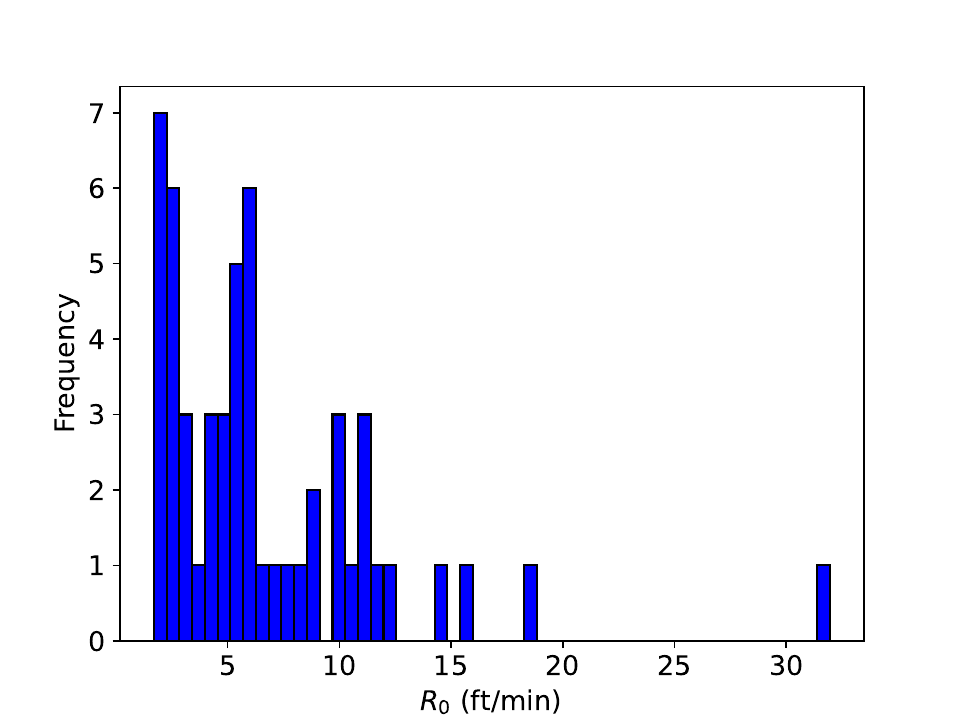}
    \caption{No-wind, No-slope rate of spread.}
    \label{fig:R0}
  \end{subfigure}

  \caption{Histograms of values considering the 53 fuel models listed by~\textcite{Andrews/2018}.}
  \label{fig:estimates}
\end{figure}

\subsection*{Slope}

As discussed in Section~\ref{sec:rothermel} and illustrated by Figure~\ref{fig:rothermel}, the slope factor $\Phi_s$ in Rothermel's model depends on the tangent of the slope angle $\phi$. More specifically, given two adjacent vertices $u, v \in V$, the tangent of the slope from $u$ to $v$ is $$A_{uv} = (v_z - u_z) / d,$$ where $d$ is the distance between $u$ and $v$ in the $xy$-plane.

While the $x$ and $y$ coordinates of each vertex are determined by the grid size, the $z$ coordinates are generated using Perlin noise~\parencite{Perlin/2002}. This gradient-based noise function is commonly used in procedural terrain generation to produce natural-looking variations in elevation. Each vertex $v \in V$ is assigned a Perlin noise value $P_v$ in the range $[0, 1]$, which we map to the interval $[0, N_z]$ feet. By adjusting the maximum elevation $N_z$ we control the steepness of the terrain. We define our slope categories based on National Wildfire Coordinating Group field guides, which classify slopes of approximately 0$^\circ$ as flat, 20$^\circ$ as moderate, and 40$^\circ$ as steep~\parencite{NWCG2021FieldGuide}. Our choice of $N_z$ values as defined in Table~\ref{tab:gfactors} produce average slope angles consistent with the slope classifications in the literature.

The slope factor $\Phi_s$ in Rothermel's model depends also on the packing ratio $\beta$. By analyzing 53 fuel models compiled by \textcite{Andrews/2018}, we found that small values of $\beta$ are prevalent across diverse vegetation types, with a significant concentration near $0.005$ (Figure~\ref{fig:beta}). Consequently we adopt $\beta = 0.005$ as a representative value when computing $\Phi_s(A_{uv})$.

Finally, we impose a technical cap on terrain steepness. Because the empirical coefficients for $\Phi_s$ derived by \textcite{Rothermel/1972} were developed for angles below 40$^\circ$, and slopes exceeding 45$^\circ$ are not addressed in the literature, we cap the slope angle at 45$^\circ$ during the generation process to maintain the physical validity of the model.

\subsection*{Wind}

The wildfire suppression problem assumes static fire travel times, meaning wind effects remain constant over the time horizon. To incorporate wind influence, we define a time-invariant wind field. Introducing time-varying wind would require local wind forecasts throughout that horizon and would transfer their uncertainty into a different dynamic or stochastic optimization problem. This field is characterized by a predominant wind direction, represented by a unit vector $\textbf{w} \in \mathbb{R}^2$. To introduce local variations, we assign a wind velocity vector $\textbf{w}_{uv}$ to each pair of adjacent vertices $u, v \in V$. This vector accounts for wind conditions around the corresponding grid cells and is derived by perturbing the primary wind direction $\textbf{w}$ in both magnitude and orientation. Figure~\ref{fig:instances:wind} illustrates a wind field for the small grid introduced in Figure~\ref{fig:instances:landscape}.

\begin{figure}[]
  \centering
  \includegraphics[scale=0.4]{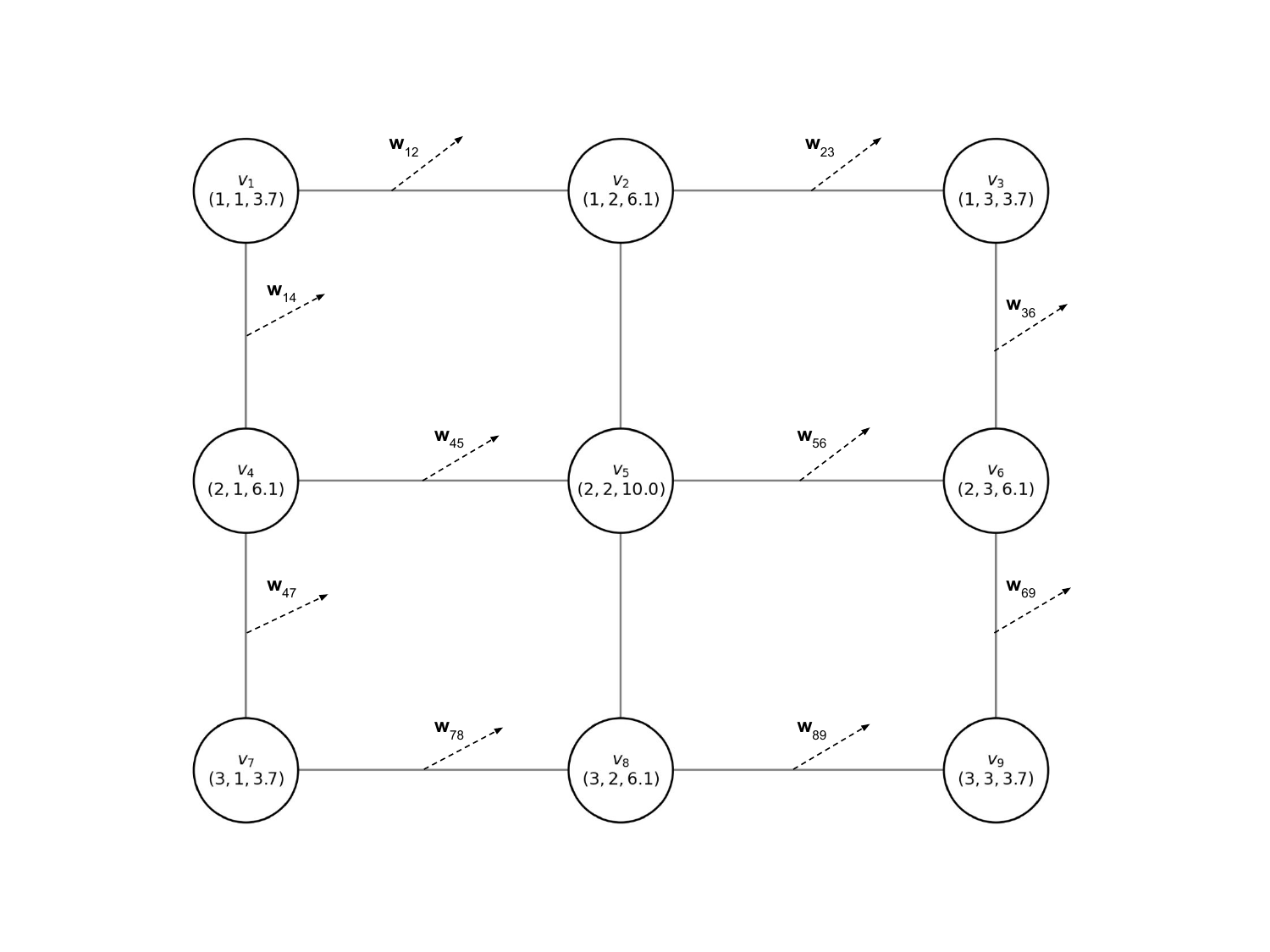}
  \caption[Illustration of the wind field used in the instance model.]{Illustration of the wind field used in the instance model. Each vector $\textbf{w}_{uv}$ represents the wind velocity between adjacent vertices $u$ and $v$, incorporating both directional and magnitude perturbations relative to the predominant wind direction. These variations are introduced using Perlin noise.}
  \label{fig:instances:wind}
\end{figure}

To introduce variability, we apply two Perlin noise functions to the wind vector $\textbf{w}$. The first noise, mapped to the interval $[-\pi / 6, \pi / 6]$, introduces directional deviations from the main wind direction. The second noise determines wind speed and is mapped to an interval based on the wind category.

To categorize the magnitude of the local wind vectors $\textbf{w}_{uv}$ we employ the Modified Beaufort Scale \parencite{Britannica/2025}, which provides standard ranges for 20-foot wind speeds (e.g., ``Light'', ``Moderate'', ``Strong''). Since Rothermel's model requires the midflame wind speed as input, we convert the 20-foot wind speed ranges defined by the Beaufort scale using a wind adjustment factor of 0.3. This specific value is selected as it represents a common value used to estimate midflame wind speed under many typical field conditions \parencite{Andrews/2012}. Table~\ref{tab:gfactors} presents the selected wind categories and the corresponding adjusted midflame wind speed ranges.

For any pair of adjacent vertices, let $\pi_{uv}$ represent the rotational perturbation and $s_{uv}$ the wind speed scaling factor. The resulting wind velocity vector is $$\textbf{w}_{uv} = s_{uv} R(\pi_{uv}) \textbf{w},$$ where $R(\pi_{uv})$ is the rotation matrix that rotates the vector $\textbf{w}$ by an angle $\pi_{uv}$. Since wind conditions are assumed to be symmetric between adjacent vertices, we enforce $\textbf{w}_{uv} = \textbf{w}_{vu}$, which implies $s_{uv} = s_{vu}$ and $\pi_{uv} = \pi_{vu}$.

For each pair of adjacent vertices $u, v \in V$, let $\textbf{n}_{uv} \in \mathbb{R}^2$ denote the unit direction vector from $u$ to $v$ in the $xy$-plane. The wind speed component along this direction is given by $$U_{uv} = \textbf{w}_{uv}^{T} \textbf{n}_{uv}.$$

As with the slope factor, the wind factor $\Phi_w$ in Rothermel's model depends not only on wind speed $U_{uv}$, but also on the surface-area-to-volume ratio $\sigma$ and the relative packing ratio $\beta_{rel}$. To choose appropriate values for these parameters, we analyzed the 53 fuel models compiled by \textcite{Andrews/2018}. Figures~\ref{fig:sigma} and~\ref{fig:beta_rel} show the distribution of $\sigma$ and $\beta_{rel}$ across the dataset. The histogram for $\sigma$ shows a clear mode at 2000~ft$^2$/ft$^3$, and the histogram for $\beta_{rel}$ indicates that most values are concentrated close to 1. Consequently, we set $\sigma = 2000$~ft$^2$/ft$^3$ and $\beta_{rel} = 1$ in all our computations of the wind factor $\Phi_w(U_{uv})$.

\subsection*{Fire Propagation}

Rothermel's model defines a rate of spread assuming that fuel bed and the environmental parameters remain static. Consequently, we assume that the fuel composition inside each cell is homogeneous, meaning that the no-wind, no-slope rate of spread is constant within the cell. For each vertex $v \in V$, then, we define $R_0(v)$ as the base fire spread rate. Similar to our approach for wind and slope, we analyzed the 53 fuel models compiled by \textcite{Andrews/2018} (Figure~\ref{fig:R0}) and observed that most values are concentrated between $1$ ft/min and $15$ ft/min. We generate values of $R_0(v)$ using Perlin noise, mapping the noise to the range $[1, 15]$ ft/min.

The rate of spread of a fire depends on the directional influence of wind and slope (Albini's extensions). In our model, this is determined by examining the signs of wind speed $U_{uv}$ and slope tangent $A_{uv}$. Specifically, $U_{uv} \geq 0$ indicates a headfire, and $U_{uv} < 0$ a backfire. Likewise, $A_{uv} \geq 0$ corresponds to upslope spread, and $A_{uv} < 0$ corresponds to downslope spread. The resulting rate of spread $R(u; \textbf{n}_{uv})$ within cell $u$ in the direction of $\textbf{n}_{uv}$ is given by $R(u; \textbf{n}_{uv}) = R_0(u) r(u; \textbf{n}_{uv})$, where the multiplier $r$ is defined by
$$
r(u; \textbf{n}_{uv}) =
\begin{cases}
  1 + \Phi_w(U_{uv}) + \Phi_s(A_{uv}),               & \text{if $A_{uv} \geq 0, U_{uv} \geq 0$}, \\
  1 + \max \{0, \Phi_w(U_{uv}) - \Phi_s(A_{uv})\},   & \text{if $A_{uv} < 0, U_{uv} \geq 0$},     \\
  1 + \max \{0, \Phi_s(A_{uv}) - \Phi_w(|U_{uv}|)\}, & \text{if $A_{uv} \geq 0, U_{uv} < 0$},    \\
  1,                                                 & \text{if $A_{uv} < 0, U_{uv} < 0$}.
\end{cases}
$$

To determine the fire propagation time from the center of cell $u$ to the center of cell $v$, we account for the transition between the different spread velocities of the two cells. We compute this by taking the distance $d(u,v)$ over the harmonic mean of the velocities in the given direction. For any arc $uv \in A$, the fire propagation time is $$t_{uv} = d(u,v) \frac{R(u; \textbf{n}_{uv}) + R(v; \textbf{n}_{uv})}{2 R(u; \textbf{n}_{uv}) R(v; \textbf{n}_{uv})}.$$

\subsection*{Optimization Horizon}

As noted by~\textcite{Harris/2022}, in the $20\times20$ grid instances generated by~\textcite{Mendes/2022}, only a small subset of the 400 vertices burns before the horizon $H$. The instances proposed by~\textcite{Harris/2022} suffer from this issue to a lesser extent. Vertices that do not burn within $H$ can be removed from the instance since they never contribute to the objective function, which can make instances simpler than their nominal size suggests.

Simply setting the horizon to exceed the latest fire arrival time for any vertex under a free-burning process could result in an excessively large $H$ due to a few outliers that burn very late. Conversely, capping the horizon at a reasonable value, such as 48 hours, might inadvertently remove too many vertices, recreating the problem found in earlier literature.

To address this, consider $B^{\A_0}_t$, the set of vertices burned at time $t$ under a free burning process. We define $q(p)$ as the time when $p\%$ of the total vertices have burned:
$$q(p) = \max \left\{t : |B^{\A_0}_t| / |V| \leq p\% \right\}.$$

In this formulation, $q(p)$ is measured in hours. To ensure that the optimization horizon is robust, we define
$$H = \max\!\left\{\, \min\!\big\{\max \{q(100),\, 24\text{h}\},\, 48\text{h}\big\},\; q(70) \right\}.$$
This logic ensures that horizon $H$ is at least 24 hours and that at least $70\%$ of the vertices burn before the horizon, and hence cannot be trivially removed from the instance. It also ensures that if the final $30\%$ of vertices burn very late, the horizon is capped at 48 hours.

\subsection*{Fire Suppression Resources}

To complete the WSP instance definition, we must specify the number of available resources $k$, the sequence of release times $t_1 \leq t_2 \leq \cdots\leq t_{T}$, and the subset $R_t \subseteq R$  of resources released at time $t \in [0,H]$. Additionally, we must define the suppression delay $\Delta$ provided by each resource.

To maintain consistency across different instances, we define the delay $\Delta$ relative to the optimization horizon $H$. As shown in Table~\ref{tab:gfactors} we consider three possible delay categories. Under high delay ($\Delta = H$), a resource effectively blocks fire spread, since any fire arrival time passing through a protected vertex will be at least $H$. For low and medium delay values, resources do not fully block fire spread but still significantly slow it down.

The number of available resources $k$ is scaled proportionally to the  grid size $n$. Table~\ref{tab:gfactors} shows the number of available resources per grid size. An instance with a grid size $20 \times 20$ and many resources, for example, has $40$ available resources. This scaling reflects the observation that, in planar graphs optimal resource allocations often form fire suppression lines. In instances with a moderate amount of resources, then, there are enough resources to construct a firebreak that spans the dimension of the landscape. In contrast, an instance with few resources may require a different allocation strategy.

We now turn our attention to the sequence of times at which resources become available. First, we establish the number of decision points $T$ where resources are released. This parameter is important because it directly dictates the size of the MIP model and the complexity of the master problem of the Benders decomposition proposed by~\textcite{Harris/2022}. It also influences the depth of the search tree of the beam search of~\textcite{Delazeri/2024a}.

For a given number of decision points $T$, the total number of available resources $k$ is distributed evenly across the time instants by first setting $r_i = \floor{k / T} + [i \leq k \mod T]$ for $i \in [T]$, and then randomly permuting quantities $r_i$, to avoid biasing instances towards higher suppression capacity in earlier time instants.

Lastly, we define the release times $t_1 \leq t_2 \cdots \leq t_T$ relative to the progression of fire, specifically the percentage of the landscape burned by a given time under a free-burning process. This allows a clearer understanding of the wildfire's state when suppression resources arrive. By anchoring the first release time to a burn percentage, we avoid two extremes: releasing resources too early, when the fire perimeter is small enough to be easily contained, or too late, when most of the landscape has already burned and the remaining problem becomes trivial.

Recall that $q(p)$ is the time when $p\%$ of the vertices burned under a free-burning process. Decision points are equally spaced over the interval between the first and the last release times, both of which are determined via $q(p)$ for specific values of $p$. Table~\ref{tab:gfactors} summarizes the possible configurations.

\section{Experiments} \label{sec:experiments}

In this section, we present computational experiments to analyze the performance of the MIP model described in Section~\ref{sec:mip_formulation}, and to benchmark all algorithms available in the literature using the proposed instance generator. All the experiments were done on a platform equipped with a 3.5 GHz AMD Ryzen 9 3900X 12-Core processor, 32 GB of main memory, and Ubuntu Linux 20.04.

The experiments compare the beam search of \textcite{Delazeri/2024a} (IBS), the logic-based Benders Decomposition of \textcite{Harris/2022} (LBBD), the iterated local search of \textcite{Mendes/2022}, the MIP formulation of \textcite{Alvelos/2018} with the improvements proposed by \textcite{Harris/2022} ($\text{MIP}_{A}$), and the MIP formulation described in Section~\ref{sec:mip_formulation} (MIP). As a baseline, we also consider a random search algorithm (RS), which is described in Appendix~\ref{app:new_instances:random_search}.

We used the C++ implementation of IBS available in the literature, compiled with GCC 13.3.1 using maximum optimization. The random search was also implemented in C++ and compiled in the same way. The implementation of LBBD and ILS are in Python and were provided by \textcite{Harris/2022}.

For the exact algorithms (LBBD, $\text{MIP}_{A}$, and MIP) we employed Gurobi 10.0.3 as the underlying solver with default settings. Regarding the heuristics IBS and ILS, we adopted parameter settings recommended by \textcite{Mendes/2022} and \textcite{Delazeri/2024a}, with two adjustments: the beam search parameter $\hat{p}$ was set to $1$, and the number of random initial solutions for ILS was also set to $1$. We observed that these settings lead to better results.

All instances, detailed experimental results, and the instance generator are available at \url{https://github.com/gutodelazeri/WSP}. The codes of LBBD, ILS, and IBS are from the respective papers.

\subsection{Literature Instances} \label{sec:experiments:literature}

We consider the set of instances proposed by \textcite{Harris/2022}, consisting of 16 problems defined on $20 \times 20$ grid graphs. In all these instances, the fire starts at the center of the grid, and the optimization horizon is fixed at $70$. The instances are divided into two groups, LA and LB, each containing $8$ instances. Within each group, all instances share identical resources release schedules and delay magnitudes. The instance set is summarized in Table~\ref{tab:algorithms:experiments:instances}. The optimal solution is known for every instance.

\begin{table}[]
  \centering
  \setlength{\tabcolsep}{10pt}
  \caption{Summary of literature instances.}\label{tab:algorithms:experiments:instances}
  \begin{tabular}{ccccccccc}
    \toprule
        & \multicolumn{6}{c}{Resources per time instant} &    &       \\ \cmidrule(lr{.75em}){2-7}
  Group  & 10     & 20     & 30     & 40     & 50     & 60    & $H$  & $\Delta$ \\
  \midrule
  LA     & 3      & 3      & 3      & 3      & 0      & 0     & 70 & 50    \\
  LB     & 3      & 3      & 3      & 3      & 3      & 3     & 70 & 30    \\
  \bottomrule
  \end{tabular}
\end{table}

Fire travel times are modeled using direction-dependent uniform distributions to capture the influence of wind. Directions aligned with the wind are assigned shorter propagation times, while opposing directions are slower. For each arc, the travel time is sampled from a distribution determined by the arc's direction. For example, to model fire propagation times under the influence of a south-east wind, arcs pointing south use $U(2, 4)$ and arcs pointing east use $U(4, 6)$, while arcs pointing north and west use $U(7,9)$ and $U(6,8)$. See Table~5 in \textcite{Harris/2022} for details.

\begin{table}
    \newcolumntype{C}{ @{}>{${}}c<{{}$}  @{}}
    \centering
    \caption[Comparison of absolute deviation and MIP gap between MIP models]{Comparison of the absolute deviation from the optimal objective value and the MIP gap (\%) between our MIP model (MIP), the model of \textcite{Alvelos/2018} $\text{MIP}_{A}$, and the logic-based Benders Decomposition of \textcite{Harris/2022}. Values are mean $\pm$ standard deviation over $10$ runs. In cases when $\text{MIP}_{A}$ failed to find a feasible solution, the worst outcome (all vertices burn) was assumed when computing the statistics. The final two rows show summary statistics for each instance group. The best average absolute deviation for each instance is highlighted in bold.} \label{tab:literature_instances}

    \smallskip
    \begin{tabular}{l *6{rCl}}
    \toprule
     & \multicolumn{9}{c}{Avg. Abs. Dev. to BKV} & \multicolumn{9}{c}{Avg. MIP gap (\%)} \\ \cmidrule(lr){2-10} \cmidrule(lr){11-19}
          & \multicolumn{3}{c}{MIP} & \multicolumn{3}{c}{$\text{MIP}_{A}$} & \multicolumn{3}{c}{LBBD} & \multicolumn{3}{c}{MIP} & \multicolumn{3}{c}{$\text{MIP}_{A}$} & \multicolumn{3}{c}{LBBD} \\
    \cmidrule(lr){2-4} \cmidrule(lr){5-7} \cmidrule(lr){8-10}  \cmidrule(lr){11-13} \cmidrule(lr){14-16} \cmidrule(lr){17-19}
    LA0 & $0.2$ & \pm & $0.4$ & $44.1$ & \pm & $24.0$  & $\textbf{0.0}$ & \pm & $0.0$ & $16.7$ & \pm & $1.4$ & $126.6$ & \pm & $20.0$ & $0.0 $ & \pm & $0.0$ \\
    LA1 & $\textbf{0.0}$ & \pm & $0.0$ & $78.1$ & \pm & $24.5$  & $\textbf{0.0}$ & \pm & $0.0$ & $ 0.0$ & \pm & $0.0$ & $114.9$ & \pm & $17.6$ & $0.0 $ & \pm & $0.0$ \\
    LA2 & $1.2$ & \pm & $0.9$ & $52.6$ & \pm & $26.8$  & $\textbf{0.0}$ & \pm & $0.0$ & $15.0$ & \pm & $1.1$ & $135.1$ & \pm & $28.1$ & $0.0 $ & \pm & $0.0$ \\
    LA3 & $0.2$ & \pm & $0.4$ & $46.0$ & \pm & $24.9$  & $\textbf{0.0}$ & \pm & $0.0$ & $ 7.0$ & \pm & $0.7$ & $70.1 $ & \pm & $16.9$ & $0.0 $ & \pm & $0.0$ \\
    LA4 & $\textbf{0.0}$ & \pm & $0.0$ & $94.0$ & \pm & $12.2$  & $\textbf{0.0}$ & \pm & $0.0$ & $ 0.4$ & \pm & $0.8$ & $123.6$ & \pm & $5.8 $ & $0.0 $ & \pm & $0.0$ \\
    LA5 & $\textbf{0.0}$ & \pm & $0.0$ & $60.4$ & \pm & $14.8$  & $\textbf{0.0}$ & \pm & $0.0$ & $ 1.0$ & \pm & $1.3$ & $78.0 $ & \pm & $11.0$ & $0.0 $ & \pm & $0.0$ \\
    LA6 & $\textbf{0.0}$ & \pm & $0.0$ & $63.7$ & \pm & $23.4$  & $\textbf{0.0}$ & \pm & $0.0$ & $12.2$ & \pm & $1.3$ & $97.3 $ & \pm & $13.9$ & $0.0 $ & \pm & $0.0$ \\
    LA7 & $\textbf{0.0}$ & \pm & $0.0$ & $48.1$ & \pm & $18.1$  & $\textbf{0.0}$ & \pm & $0.0$ & $ 4.1$ & \pm & $0.6$ & $64.0 $ & \pm & $11.4$ & $0.0 $ & \pm & $0.0$ \\
    LB0 & $\textbf{3.8}$ & \pm & $3.0$ & $75.1$ & \pm & $32.1$  & $6.6$ & \pm & $3.2$ & $32.9$ & \pm & $2.4$ & $185.7$ & \pm & $29.6$ & $12.7$ & \pm & $3.6$ \\
    LB1 & $2.4$ & \pm & $1.6$ & $58.6$ & \pm & $34.7$  & $\textbf{0.0}$ & \pm & $0.0$ & $17.8$ & \pm & $1.5$ & $121.1$ & \pm & $24.2$ & $0.0 $ & \pm & $0.0$ \\
    LB2 & $2.8$ & \pm & $1.7$ & $51.1$ & \pm & $30.8$  & $\textbf{2.3}$ & \pm & $2.4$ & $27.6$ & \pm & $0.9$ & $150.0$ & \pm & $30.5$ & $8.6 $ & \pm & $2.8$ \\
    LB3 & $1.9$ & \pm & $4.3$ & $47.9$ & \pm & $29.7$  & $\textbf{1.1}$ & \pm & $2.3$ & $15.5$ & \pm & $2.6$ & $85.4 $ & \pm & $19.7$ & $5.5 $ & \pm & $3.7$ \\
    LB4 & $\textbf{6.4}$ & \pm & $3.0$ & $80.8$ & \pm & $23.5$  & $7.6$ & \pm & $5.1$ & $23.4$ & \pm & $1.7$ & $146.9$ & \pm & $18.4$ & $9.9 $ & \pm & $3.6$ \\
    LB5 & $3.9$ & \pm & $2.9$ & $46.0$ & \pm & $23.1$  & $\textbf{3.8}$ & \pm & $3.6$ & $19.0$ & \pm & $2.0$ & $86.7 $ & \pm & $13.4$ & $8.3 $ & \pm & $2.3$ \\
    LB6 & $\textbf{4.4}$ & \pm & $3.8$ & $51.2$ & \pm & $27.6$  & $5.2$ & \pm & $4.5$ & $22.0$ & \pm & $2.6$ & $101.4$ & \pm & $16.9$ & $8.3 $ & \pm & $3.0$ \\
    LB7 & $\textbf{1.5}$ & \pm & $3.1$ & $56.6$ & \pm & $24.4$  & $4.5$ & \pm & $2.3$ & $15.9$ & \pm & $2.1$ & $77.3 $ & \pm & $13.1$ & $10.1$ & \pm & $2.2$ \\
    \midrule
    LA  & $0.2$ & \pm & $0.5$ & $60.9$ & \pm & $26.4$  & $\textbf{0.0}$ & \pm & $0.0$ & $ 7.0$ & \pm & $6.5$ & $101.2$ & \pm & $30.6$ & $0.0$ & \pm & $0.0$  \\
    LB  & $\textbf{3.4}$ & \pm & $3.3$ & $58.4$ & \pm & $29.8$  & $3.9$ & \pm & $4.0$ & $21.7$ & \pm & $6.1$ & $119.3$ & \pm & $41.8$ & $7.9$ & \pm & $4.5$ \\
    \bottomrule
    \end{tabular}
\end{table}

The three exact algorithms available in the literature (MIP, $\text{MIP}_{A}$, and LBBD) were executed 10 times per instance with a time limit of 600 seconds per run. We evaluate performance using the absolute deviation from the optimal objective value and the MIP gap (\%). For any instance $I$ and algorithm $A$, let $A(I)$ be the objective value obtained, $A^{\star}(I)$ the known optimal value, and $L(I)$ the best lower bound found within the time limit. The absolute deviation is calculated as $A(I) - A^{\star}(I)$ and the MIP gap as $(A(I) - L(I)) / L(I)$. We use the absolute deviation as our performance metric because the 16 instances are structurally similar (they have nearly the same number of vertices) and their optimal objective values are of the same order of magnitude.

The results are shown in Table~\ref{tab:literature_instances}. The first group of columns show that our MIP formulation slightly outperforms and significantly surpasses $\text{MIP}_{A}$, in terms of solution quality. MIP obtained an average absolute deviation of $0.2$ on the LA group and of $3.4$ on the LB group, compared to $0$ and $3.9$ for LBBD, and $60.9$ and $58.4$ for $\text{MIP}_{A}$.

Regarding the MIP gap, as shown by the second group of columns, both MIP models obtain high gaps, and fail to prove optimality in any of the replications. In contrast, LBBD proved optimality for all instances in the LA group across all replications, and reached an average MIP gap of $7.9\%$ for the LB group.

In summary, our results show that the MIP formulation described in Section~\ref{sec:mip_formulation} is highly competitive with LBBD regarding solution quality on existing benchmarks. However, large MIP gaps indicate that there is room for improvement, specifically by strengthening the relaxation to assist the solver in proving optimality.

\subsection{Instance Generator} \label{sec:experiments:new_instances}

The instance generator defines eight experimental factors, as detailed in Table~\ref{tab:gfactors}. To evaluate their influence, we study these factors in pairs, while keeping the others at their default values. We have organized these eight factors into four groups (Table~\ref{tab:gfactors}) based on the premise that factors in the same group are likely to have similar effects on algorithm behavior. For instance, grid size and the number of decision points both govern the size of the models solved by MIP and LBBD, whereas environmental factors such as wind and slope dictate the velocity and direction of fire propagation.

We select $30 \times 30$ grids as the default size because, on larger grids, most algorithms fail to make progress within reasonable time limits, which would obscure the influence of other experimental factors. The default number of decision points is set to 10, following the work of \textcite{Avci/2024}, who modeled real-world suppression efforts with around ten decisions over a 10-hour horizon. Since our test instances typically span 24 to 48 hours, using 10 to 20 decision points is realistic.

We use a ``high'' suppression delay as the default, implying that suppression resources block fire spread at the protected vertices. The default number of resources is set to ``moderate'' ($k=n$ for an $n\times n$ grid) to ensure to reveal the impact of other factors; at lower resource levels, the influence of other factors is weak.

The slope and wind are set to moderate values to avoid extreme fire spread velocities in the baseline case. The effect of extreme velocities is analyzed separately. Finally, the first release time is set to ``early'' (minimal burned area) and the final release time is set to ``very late'' (extending to the time horizon's end).

The algorithm parameters are the same as those used in Section~\ref{sec:experiments:literature}, with the exception of time limits. Since the new instances have larger grids, we scale the time limit with grid size at a rate of 1.5 seconds per grid cell. This results in limits of 600 seconds for $20 \times 20$ grids, 1350 seconds for $30 \times 30$, 2400 seconds for $40 \times 40$, and 9600 seconds for $80 \times 80$ grids.

For each factor pair, we generate five unique instances by varying the instance seed. Each algorithm is run ten times per instance. To generate the instances, we set $N_{xy}$ to 26240 feet, leaving us with square landscapes of approximately 64 km$^2$. The experiments were conducted using the same hardware and software environment detailed in Section~\ref{sec:experiments:literature}.

Our analysis employs the Skillings-Mack test, a non-parametric alternative to the repeated measures ANOVA~\parencite{Skillings/1981}. Skillings-Mack (SM) considers a design with $n$ blocks and $k$ treatments, and each cell (block-treatment combination) may have zero or more replications. Our design remains balanced, with each cell holding $c=10$ replications.  Within each block $i$, we rank all $ck$ replications (lower ranks are better), assigning average ranks to tied values. For treatment $j$, we define the block-wise mean rank of its $c$ replication ranks $$R_{ij} = \frac{1}{c}\sum_{r\in[c]} \text{rank}(z_{ijr}),$$ where $z_{ijr}$ is the response of the $r$-th replication for treatment $j$ in block $i$. Mean ranks $R_{ij}$ serve as the inputs for calculating the treatment scores in both the omnibus test and the post-hoc tests. For post-hoc comparisons we assign to each treatment $j$ a score $S_j = \sum_{i\in[n]} R_{ij}$, which is the sum of the within-block ranks across all blocks. Two treatments $j$ and $j^{\prime}$ are considered significantly different if their absolute score difference exceeds a threshold $\delta$: $$\lvert S_j - S_{j^{\prime}} \rvert > \delta.$$ The value of $\delta$ is computed based on a family-wise error rate $\alpha$ for $k$ treatments.

\begin{table}[]
  \centering
  \small
  \caption{Skillings--Mack post-hoc orderings by parameter group ($\alpha=0.001$). We use $A \prec B$ to indicate that algorithm $A$ is significantly better than $B$, and $A \approx B$ when they are not significantly different, but $A$ typically performs better than $B$.}
  \label{tab:sm_orders}
  \begin{tabular}{ll}
  \toprule
  Group & Ordering \\
  \midrule
  Instance size         & IBS $\prec$ MIP $\prec$ ILS $\prec$ (LBBD $\approx$ RS) \\
  Suppression capacity  & (MIP $\approx$ IBS $\approx$ ILS) $\prec$ (RS $\approx$ LBBD) \\
  Environmental factors & IBS $\prec$ (ILS $\approx$ MIP) $\prec$ (LBBD $\approx$ RS) \\
  Release window        & IBS $\prec$ MIP $\prec$ ILS $\prec$ LBBD $\prec$ RS \\
  \bottomrule
  \end{tabular}
\end{table}

Our first analysis evaluates relative competitiveness of the algorithms. We apply the SM test within each parameter group, where algorithms are treatments and instances are blocks, using the objective value as the response variable at a significance level $\alpha=0.001$. All omnibus tests rejected equality, so we conducted all-pairs post-hoc comparisons. Table~\ref{tab:sm_orders} shows the results.

Across most groups, IBS ranks first, often by a statistically significant margin. MIP and ILS are indistinguishable in two of the groups, and MIP is better in the other two. In contrast, LBBD and RS are indistinguishable in most cases, except within the release window group, where LBBD is better. In summary, the results indicate that IBS is the most reliable default solver, with our MIP model being as the most effective runner-up.

\begin{figure}[]
  \centering
  \begin{subfigure}{0.48\textwidth}
    \centering
    \includegraphics[width=\linewidth]{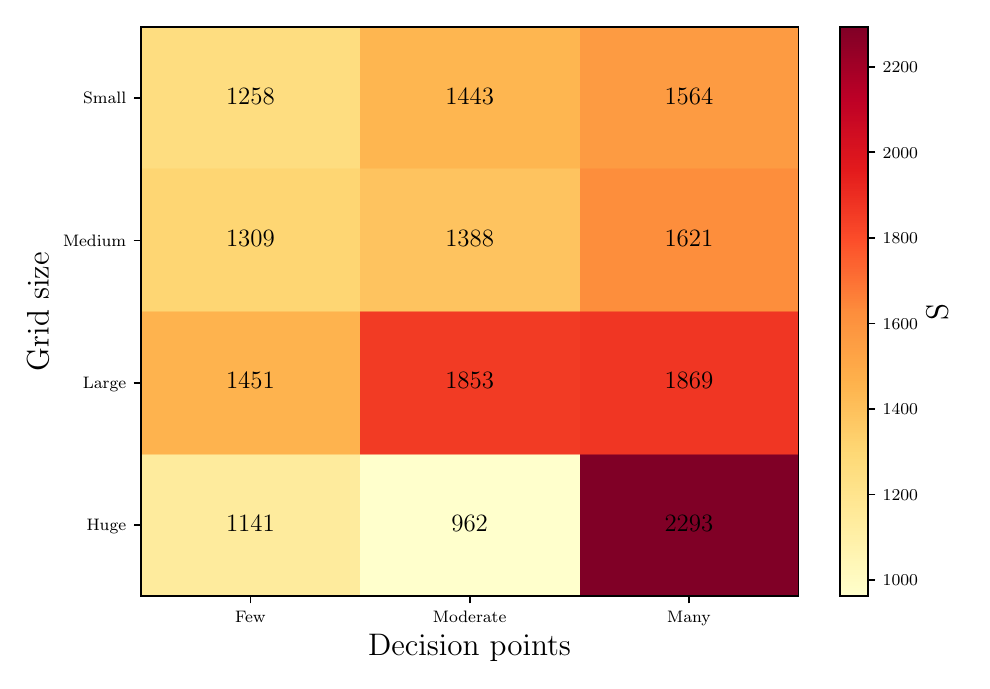}
    \caption{Instance size ($\delta=335$).}
    \label{ffig:exp:hmap:instance_size}
  \end{subfigure}
  \hfill
  \begin{subfigure}{0.48\textwidth}
    \centering
    \includegraphics[width=\linewidth]{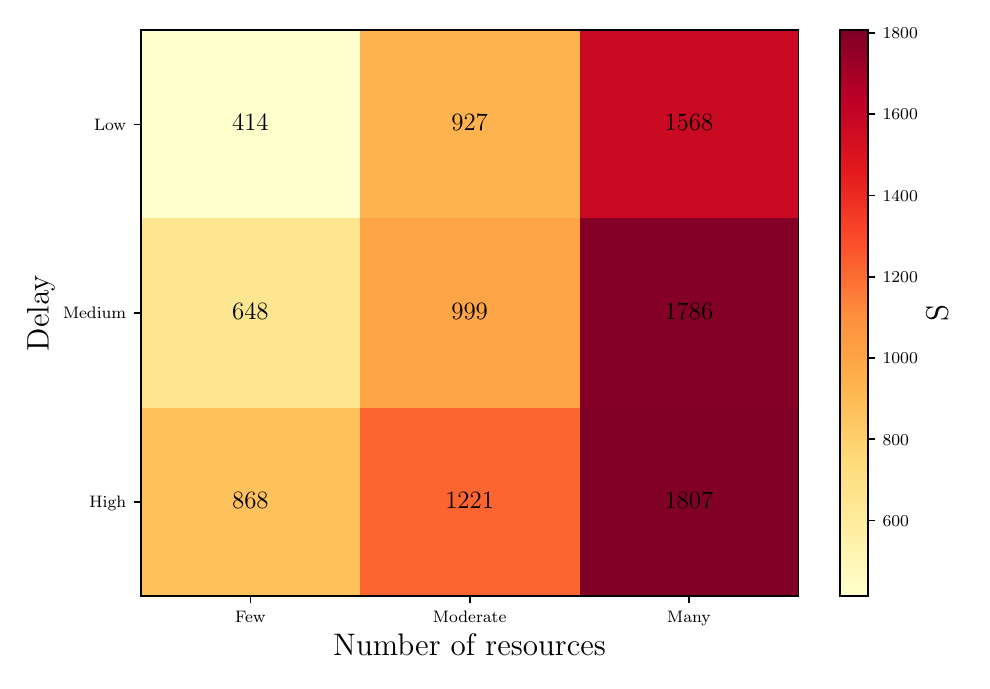}
    \caption{Suppression capacity ($\delta=244$).}
    \label{ffig:exp:hmap:suppression_capacity}
  \end{subfigure}
  \vspace{0.75em}
  \begin{subfigure}{0.48\textwidth}
    \centering
    \includegraphics[width=\linewidth]{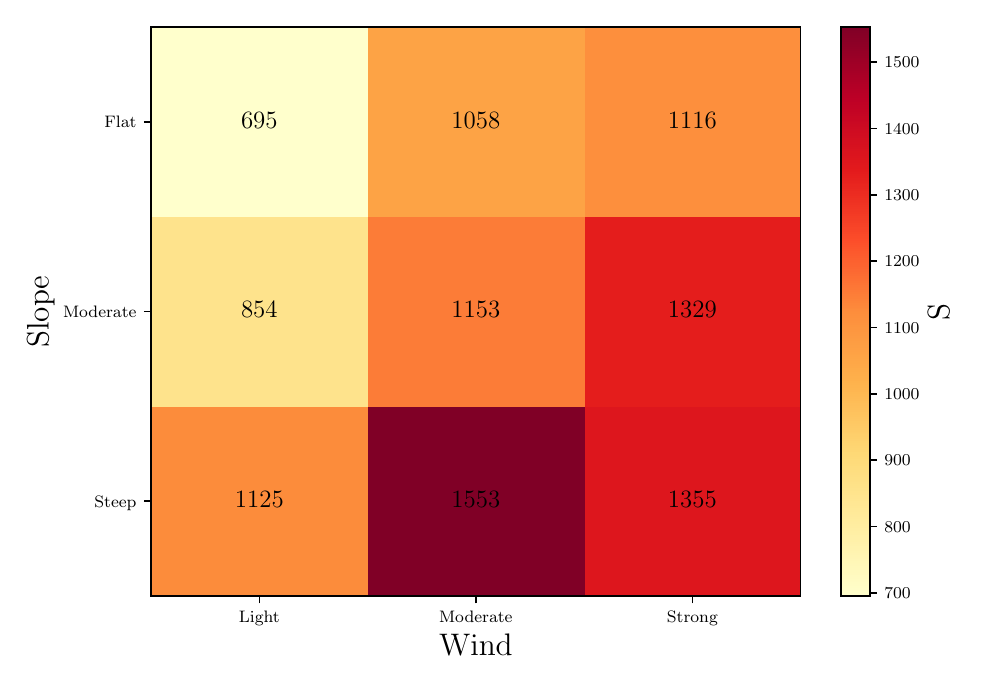}
    \caption{Environmental factors ($\delta=244$).}
    \label{ffig:exp:hmap:env_factors}
  \end{subfigure}
  \hfill
  \begin{subfigure}{0.48\textwidth}
    \centering
    \includegraphics[width=\linewidth]{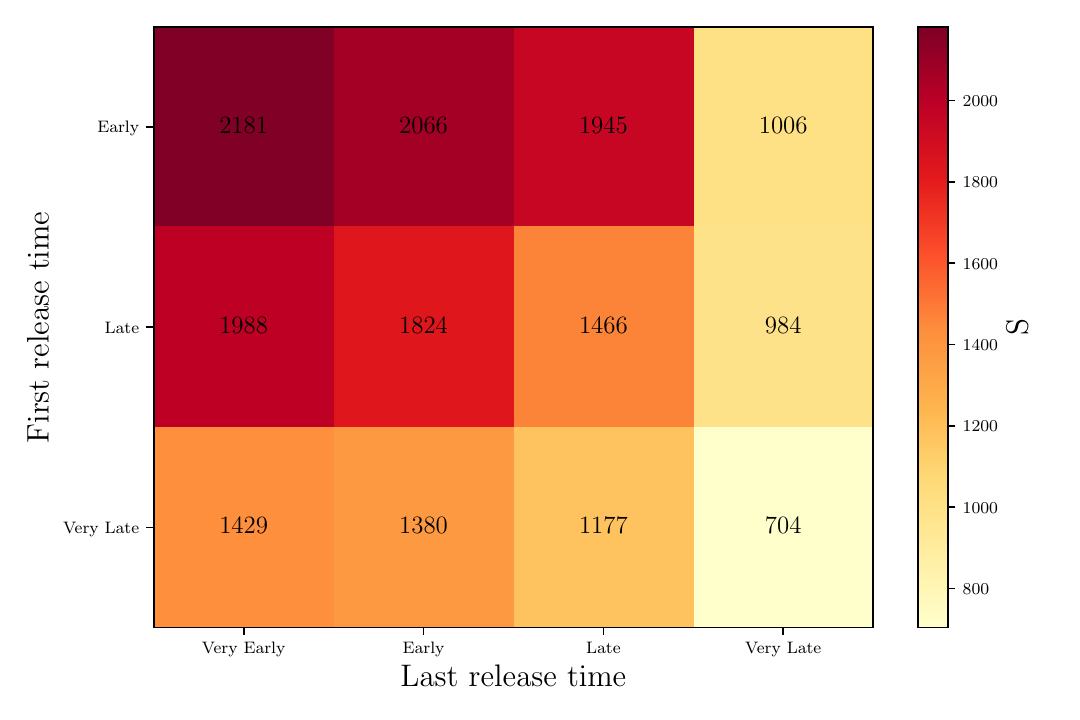}
    \caption{Release window ($\delta=335$).}
    \label{ffig:exp:hmap:release_window}
  \end{subfigure}
  \caption{Difficulty of factor combinations by group. Each cell shows the sum of within-block average ranks $S$ (lower $S$ = lower relative deviation). Within each figure, pairs with $|\Delta S|>\delta$ are significantly different at $\alpha=0.001$.}
  \label{fig:exp:hmap}
\end{figure}

Next, we analyze which combinations of factors pose the greatest challenges to the algorithms. Since the optimal solutions for these new, larger instances are unknown, we use the best-known value (BKV) of an instance (the minimum objective achieved over all algorithms and replications for a given instance). We measure difficulty using the relative deviation to the BKV, which is invariant to the scale of the objective function.

We again applied the SM test ($\alpha = 0.001$) to each parameter group, with factor combinations as treatments and algorithm-seed pairs as blocks. In all cases, the omnibus test rejected equality, prompting all-pairs post-hoc comparisons. Figure~\ref{fig:exp:hmap} shows the results for each parameter group via heatmaps. The color intensity represents the sum of within-block average ranks $S$; a lower $S$ value indicates that algorithms performed better on average on those instances. The provided $\delta$ value in each plot determines statistical significance: any pair of factor combinations with a rank difference $|\Delta S|>\delta$ is significantly different at the $\alpha=0.001$ level.

The results reveal distinct drivers of difficulty. In the instance size group, the number of decision points has the dominant effect. Increasing from ``few'' to ``many'' decisions markedly increases difficulty, whereas increasing grid size has a weaker effect. In suppression capacity, difficulty grows monotonically with the number of resources, and delay has a secondary influence (``Low delay-Few resources'' being the easiest and ``High delay-Many resources'' the hardest). For environmental factors, moving from ``light'' to ``moderate'' wind significantly increases difficulty. Furthermore, effects of terrain slope are more pronounced under ``light'' or ``moderate'' wind. Concerning release windows, later first/last releases consistently reduce difficulty (``Very Late-Very Late'' being the easiest and ``Early-Very Early'' the hardest).

\begin{figure}[]
  \centering
  \begin{subfigure}{0.48\textwidth}
    \centering
    \includegraphics[width=\linewidth]{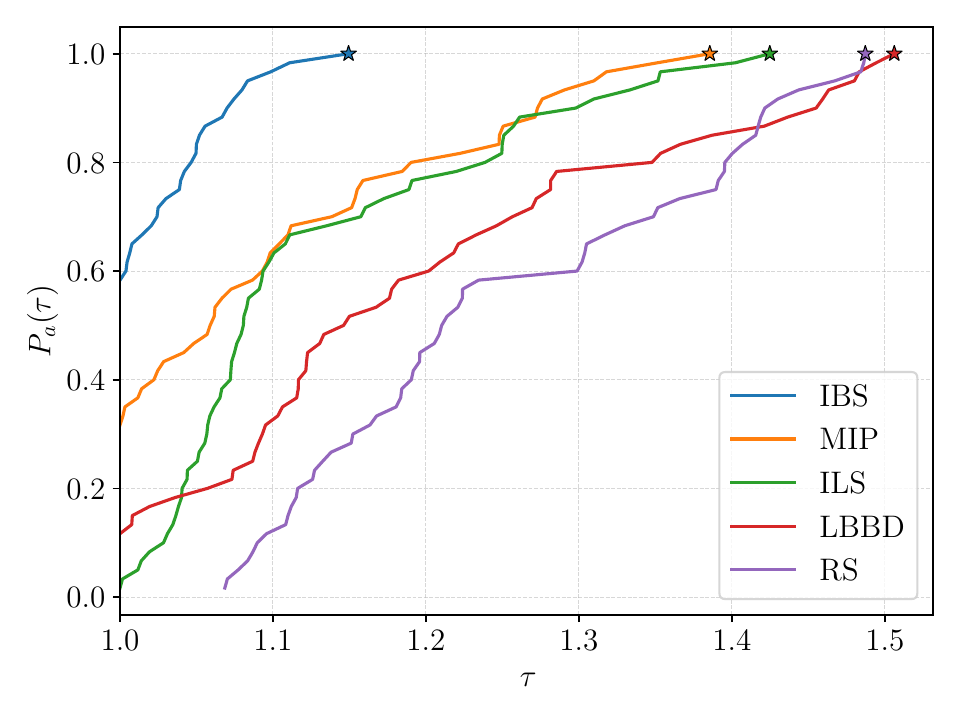}
    \caption{Instance size.}
    \label{ffig:exp:pp:instance_size}
  \end{subfigure}
  \hfill
  \begin{subfigure}{0.48\textwidth}
    \centering
    \includegraphics[width=\linewidth]{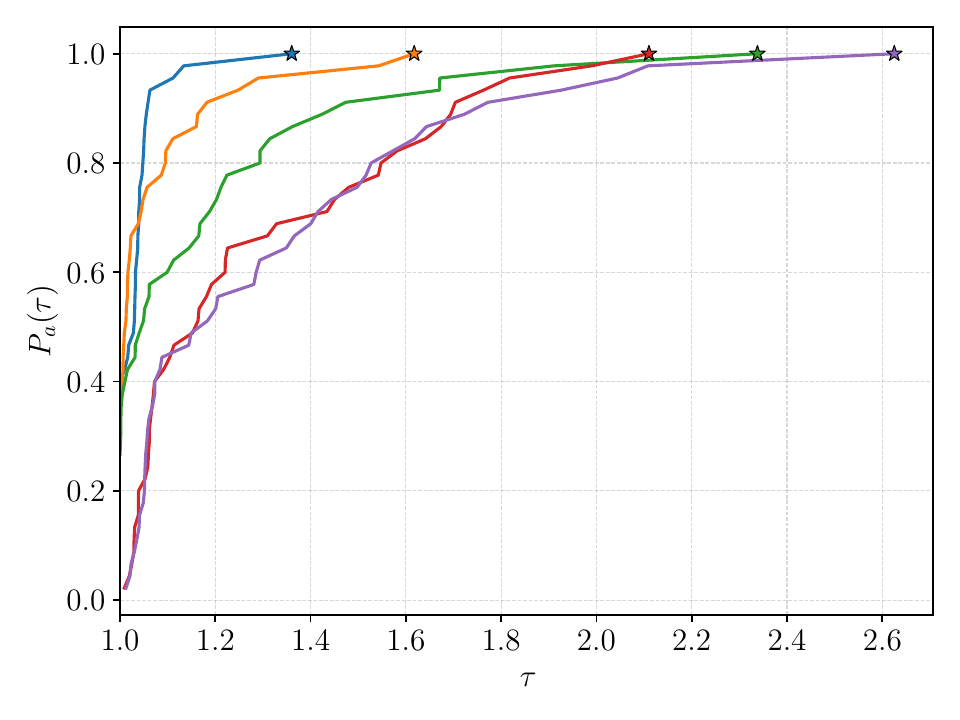}
    \caption{Suppression capacity.}
    \label{ffig:exp:pp:suppression_capacity}
  \end{subfigure}
  \vspace{0.75em}
  \begin{subfigure}{0.48\textwidth}
    \centering
    \includegraphics[width=\linewidth]{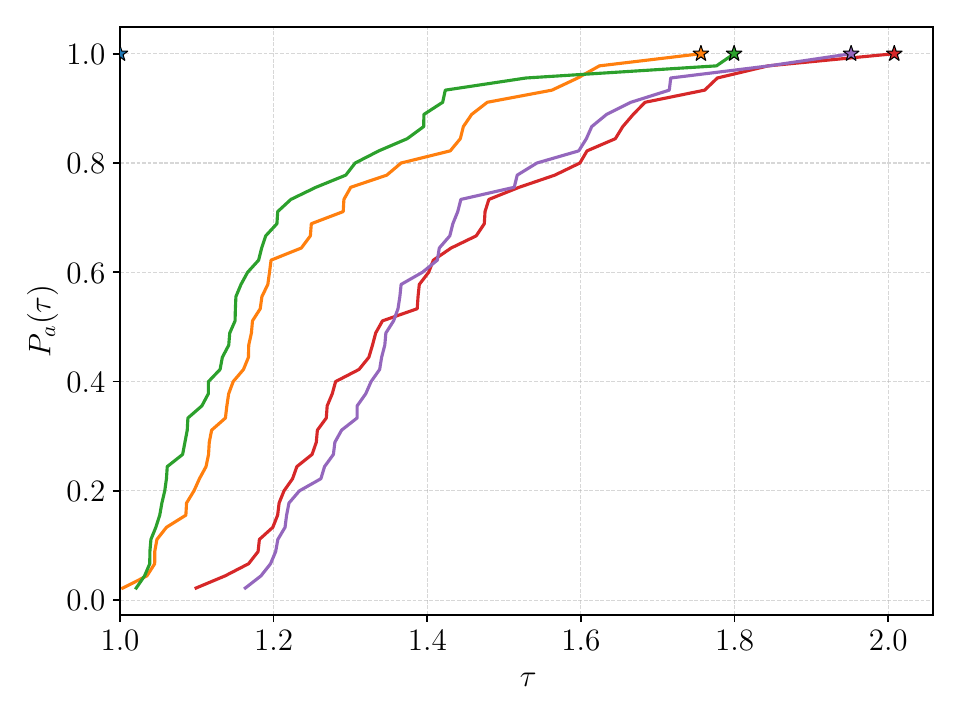}
    \caption{Environmental factors.}
    \label{ffig:exp:pp:env_factors}
  \end{subfigure}
  \hfill
  \begin{subfigure}{0.48\textwidth}
    \centering
    \includegraphics[width=\linewidth]{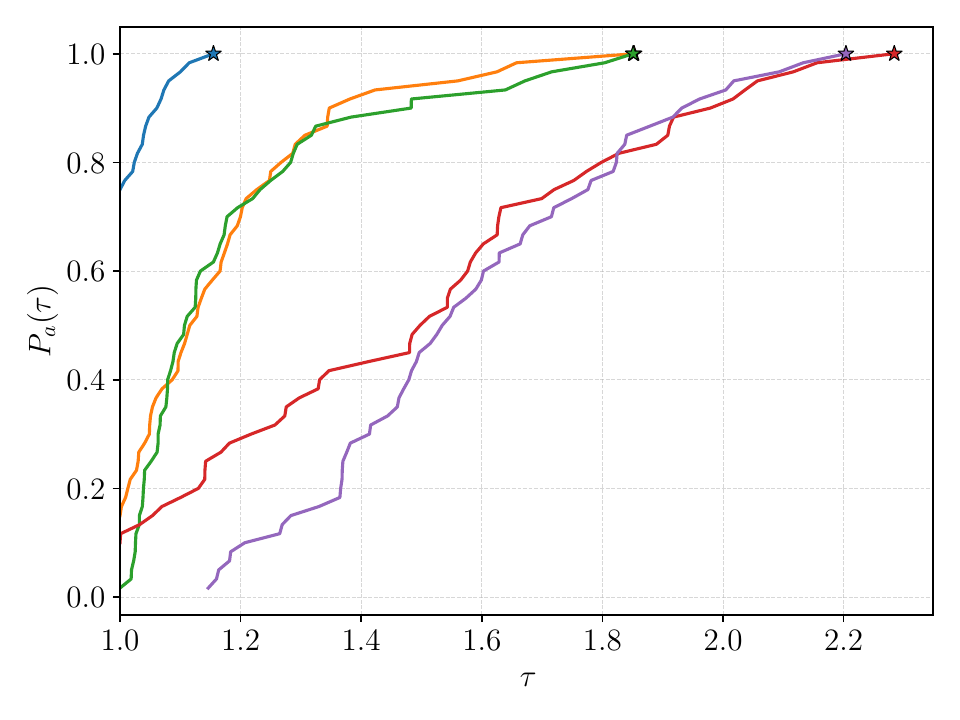}
    \caption{Release window.}
    \label{ffig:exp:pp:release_window}
  \end{subfigure}
  \caption{Performance profiles by parameter group. The performance profile $P_a(\tau)$ is the fraction of instances for which the algorithm $a$ achieves a performance ratio less than or equal to $\tau$.}
  \label{fig:exp:pp}
\end{figure}

Finally, we evaluate the relative performance of the algorithms across all generated instances using performance profiles~\parencite{Dolan/2004}. For each algorithm and instance we aggregate the objectives of the 10 runs using the median objective value $z_{a,i}$. Following \textcite{Dolan/2004}, we define $z_i^\star=\min_b z_{b,i}$ as the best median objective achieved for instance $i$ across all algorithms. We then compute the performance ratio $r_{a,i} = z_{a,i} / z_i^\star$. The performance profile $P_a(\tau)$ represents the fraction of instances for which the algorithm's performance ratio is within a factor $\tau$ of the best observed value: $$P_a(\tau)=\frac{1}{N}\lvert\{i: r_{a,i}\le\tau\}\rvert,$$ where $N$ is the number of instances (60 instances for instance size and release window, 45 instances for suppression capacity and environmental factors). For each parameter group, we plot the performance profiles for each algorithm in Figure~\ref{fig:exp:pp}.  As shown in the figures, the profiles closely align with the statistical rankings established by the Skillings-Mack tests.

\subsection{Dual Bounds} \label{sec:experiments:dual_bounds}

We finally examine the dual bounds obtained on the generated instances. Since their optimal values are unknown, we measure the quality of a dual bound $L$ relative to the best value $z^{\star}$ known for the instance by the ratio $L/z^{\star}\in[0,1]$, where one indicates a tight bound.

On the generated instances LBBD closes only $3.0\%$ of the runs ($62$ of $2100$), compared to all replications of group LA. On these instances neither method provides a bound that is useful for assessing the quality of heuristic solutions.

The relative strength of the two bounds, however, depends systematically on the suppression delay. Table~\ref{tab:dual_bounds} reports the mean ratio $L/z^{\star}$ and the percentage of runs in which the bound of the MIP is stronger. With increasing instance size both bounds deteriorate, and that of LBBD is stronger throughout. Varying the delay reverses the ranking: for a low delay the bound of the MIP is stronger in every run, and for a medium delay in $90\%$ of the runs. This is consistent with the literature instances, where LBBD proves optimality on group LA, which has the larger delay, but reaches an average gap of $7.9\%$ on group LB, which has the smaller one. A plausible explanation is that the feasibility cuts of LBBD become weak once a single resource no longer suffices to delay the fire beyond the horizon, since then combinations of resources have to be excluded, whereas the propagation constraints of our formulation do not depend on the magnitude of the delay. The effect is confined to the bounds: over the same runs our formulation obtains the better solution for a low, medium, and high delay alike, in $100\%$, $90\%$, and $88\%$ of the runs. The delay thus governs the strength of the cuts of LBBD rather than the quality of the solutions it finds.

\begin{table}[t]
  \centering
  \caption{Quality $L/z^{\star}$ of the dual bounds obtained by MIP and LBBD on the generated instances, and percentage of runs in which the bound of the MIP is stronger. In each group the stated factor is varied and all others are kept at their default values, so that the default configuration is the row ``Medium'' of the first group and the row ``High'' of the second.}
  \label{tab:dual_bounds}
  \begin{tabular}{llrrr}
    \toprule
                &        & \multicolumn{2}{c}{$L/z^{\star}$} &              \\
    \cmidrule(lr){3-4}
    Factor      & Level  & MIP  & LBBD & MIP stronger \\
    \midrule
    Grid size   & Small  & 0.64 & 0.94 & 0\%          \\
                & Medium & 0.33 & 0.74 & 0\%          \\
                & Large  & 0.26 & 0.63 & 0\%          \\
                & Huge   & 0.13 & 0.34 & 0\%          \\
    \midrule
    Delay       & Low    & 0.69 & 0.41 & 100\%        \\
                & Medium & 0.55 & 0.45 & 90\%         \\
                & High   & 0.33 & 0.74 & 0\%          \\
    \bottomrule
  \end{tabular}
\end{table}

\section{Conclusions} \label{sec:conclusions}

In this work, we have addressed the problem of allocating time-constrained fire suppression resources within a graph-based landscape. This problem has gained attention recently in the context of decision-support systems for wildfire management. Our contributions span theoretical, algorithmic, and experimental domains, as summarized below.

In Section~\ref{sec:complexity}, we established that the WSP remains strongly NP-complete when all resources are released simultaneously. We further proved strong NP-completeness for all three variants on planar graphs and for the WSP and WWSP on full weighted directed grids. The complexity of the HWSP on grids remains an interesting open question. For our specific focus, the Wildfire Suppression Problem, specialized exact and heuristic algorithms have been required because existing MIP formulations could only handle small instances, as shown by prior research and by our experiments. To resolve this, we introduced a new formulation in Section~\ref{sec:mip_formulation}. Computational experiments in Section~\ref{sec:experiments:literature} show that this formulation achieves highly competitive results when evaluated on the benchmarks from the literature.

A finding of our study is that existing instances are limited in terms of number of problems and physical realism. Furthermore, as reported in previous work, these benchmarks are no longer challenging for current methods. In Section~\ref{sec:instance_generator} we addressed this problem, by proposing a new instance generator based on Rothermel's surface fire spread model. The generator allows the creation of instances of varying degrees of difficulty and different characteristics, such as number of decision points, environmental factors, and suppression capacity. While the generator is tailored for WSP, the underlying principles used to generate the landscape and define the fire spread model can be applied to create new instances for similar problems.

In Section~\ref{sec:experiments:new_instances}, we used the generator to evaluate existing state-of-the-art algorithms and our MIP. This analysis revealed interesting effects that were obscured by limitations of literature benchmarks.

Across a wide range of factors IBS is the best algorithm, and the best choice for solving new WSP instances. Our MIP formulation and ILS are the second and third most effective, respectively. LBBD and RS are not competitive. Our experiments also identified structural drivers of problem difficulty: instances with a high number of decision points, many available resources, a high delay, or an early resource release window pose the greatest challenges to current solvers.

Our findings suggest some paths for future practical and theoretical work. While IBS is effective, it struggles on instances with few decision points. Preliminary experiments suggest that increasing the frequency with which the search tree is pruned could improve its performance. Furthermore, investigating why LBBD, previously considered a state-of-the-art approach, underperforms on realistic instances could lead to better decomposition methods. On the theoretical side, an interesting direction is the development of new combinatorial lower bounds for WSP.

\appendix

\section{Related mathematical models} \label{app:related_work}

This appendix details the connection between the models proposed by \textcite{Hof/2000} and \textcite{Wei/2011} and the MIP formulation described in Section~\ref{sec:mip_formulation}.

\textcite{Hof/2000} model the problem on a grid of cells $G = [m] \times [n]$, indexed by $c=(i,j)\in G$. The objective is to maximize the fire arrival time at target cell $t = (m,n)$, given an ignition source at cell $s = (a,b)$. For each cell $c$,  $T_c^{\text{in}}$ is the time the fire front ignites (enters) $c$, $T_c^{\text{out}}$ the time the fire front leaves $c$, $F_c$ represents the fuel remaining for combustion after suppression treatment, and $X_c$ denotes the proportion of the cell receiving treatment. Furthermore, set $\Omega_c$ contains the neighboring cells $(h,k)$ capable of igniting cell $c$, $f_c(\cdot)$ is a function relating fuel to the fire's duration within the cell, $\overline{F}_c$ is the initial fuel, $\gamma_c$ denotes the proportion of fuel removable by treatment, and $\overline{X}$ denotes the total treatment budget. The formulation presented by \textcite{Hof/2000} is as follows.
\begin{align*}
\text{max.} \quad & T_{t}^{\text{in}}  \tag{H.0} \label{hof:model:c0}                                                                                              \\
\text{s.t.} \quad & T_{s}^{\text{in}} = 0,  \tag{H.1} \label{hof:model:c1}                                                                                         \\
                  & T_{c}^{\text{in}} \leq T_{c'}^{\text{out}},                           &  & \forall c\in G, \: \forall c' \in \Omega_{c}, \tag{H.2} \label{hof:model:c2} \\
                  & T_{c}^{\text{out}} - T_{c}^{\text{in}} = f_{c}(F_{c}),                &  & \forall c\in G  \tag{H.3}, \label{hof:model:c3}                              \\
                  & F_{c} = \overline{F}_{c} - \gamma_{c} \overline{F}_{c} X_{c}, &  & \forall c\in G  \tag{H.4}, \label{hof:model:c4}                              \\
                  & 0 \leq X_{c} \leq 1,                                          &  & \forall c\in G  \tag{H.5}, \label{hof:model:c5}                              \\
                  & \sum_{c\in G} X_{c} \le \overline{X}  \tag{H.6}. \label{hof:model:c6}
\end{align*}

Constraints~\ref{hof:model:c1} sets the ignition time at the starting cell $s=(a,b)$ to $0$. Constraint~\ref{hof:model:c2} ensures a cell ignites only after the fire leaves a neighboring cell capable of igniting it. Constraint~\ref{hof:model:c3} defines the time the fire takes to pass through cell $c$ as a function of the available fuel $F_{c}$. Constraint~\ref{hof:model:c4} calculates the available fuel after treatment $X_{c}$ reduces the initial fuel $\overline{F}_{c}$. Constraint~\ref{hof:model:c5} limits the treatment proportion to the interval $[0,1]$, while constraint~\ref{hof:model:c6} restricts the total treatment budget to $\overline{X}$. The objective function maximizes the time the fire reaches the target cell $t = (m,n)$.

Note that we can merge the equality constraints~\ref{hof:model:c3} and~\ref{hof:model:c4} into constraint~\ref{hof:model:c2}, leaving us with
\begin{align*}
  T_{c}^{\text{in}} \leq T_{c'}^{\text{in}} + f_{c'}(\overline{F}_{c'} - \gamma_{c'} \overline{F}_{c'} X_{c'}), && \forall c\in G, \: \forall c' \in \Omega_{c}. \tag{H.7} \label{hof:model:c2_reformulated}
\end{align*}

In practice, $f_{c}(\cdot)$ is a linear function of the decision variable $X_{c}$, parameterized by the amount of available fuel $\overline{F}_{c}$ and the proportion of fuel that can be removed by treatment $\gamma_{c}$. For each cell $c$, let $\alpha_{c}$ and $\beta_{c}$ denote the coefficients of the linear function $f_{c}(\cdot)$. We obtain the following constraint for the fire propagation
\begin{align*}
  T_{c}^{\text{in}} \leq  T_{c'}^{\text{in}} + \alpha_{c'} X_{c'} + \beta_{c'}, &  & \forall c\in G, \: \forall c' \in \Omega_{c} \tag{H.8}. \label{hof:model:c2_final}
\end{align*}

We can reformulate this model using a graph $G=(V, A)$, where each cell $c = (i,j)$ corresponds to a vertex $v \in V$ and the set of arcs $A$ represents the adjacency relation defined by $\Omega$. We replace $T_{c}^{\text{in}}$ by $a_v$ to denote fire arrival time at vertex $v$, $X_{c}$ by $r_{v}$ to denote the resource allocation decision at vertex $v$, and $\overline{X}$ by $k$ to denote the total treatment budget. If we let $s \in V$ denote the ignition vertex and $t \in V$ the target vertex, we can rewrite the model as
\begin{align*}
(M_{hof}) \quad \text{max.} \quad & a_t  \tag{H.9} \label{hof:model:c0_reformulated}                                                                        \\
\text{s.t.} \quad                 & a_s = 0,  \tag{H.10} \label{hof:model:c1_reformulated}                                                                  \\
                                  & a_v \leq  a_u + \alpha_{u} r_{u} + \beta_{u}, &  & \forall uv \in A, \tag{H.11} \label{hof:model:c2_reformulated_final} \\
                                  & 0 \leq r_{v} \leq 1,                          &  & \forall v \in V,  \tag{H.12} \label{hof:model:c5_reformulated}       \\
                                  & \sum_{v \in V} r_{v} \le k.  \tag{H.13} \label{hof:model:c6_reformulated}
\end{align*}

A comparison reveals that the model described above is a simplified version of the MIP formulation presented in Section~\ref{sec:mip_formulation}. One difference is that the suppression variables $r_v$ lack a temporal dimension, such that resources have no release times. Also, fire propagation from a vertex $u$ to a vertex $v$ depends only on the presence of a resource at $u$, rather that the specific direction of the arc $uv$.

Problem HWSP (defined in Section~\ref{def:hwsp}) allows for a set $D\subseteq V$ of target vertices. Model $M_{hof}$ could be extended by adding a new continuous variable $\gamma$, constraints $\gamma \geq a_v$ for $v \in D$, and changing the objective function to $\max \; \gamma$. Problem HWSP also considers integral allocation decisions, which can be obtained by restricting the variables $a_v$ to be binary. Both modifications were explored by \textcite{Hof/2000}.

We now analyze the model proposed by \textcite{Wei/2011}. The notation is the same used in the previous model and Section~\ref{sec:mip_formulation}, with the addition of two parameters: each vertex $v$ has a value $w_v$, which is lost if the vertex burns, and a predicted flame length $F_v$, which is typically estimated using fire simulators. There is also a flame length threshold $\bar F$ that indicates the maximum flame length that can be suppressed, and a maximum number of resources $k$ that can be allocated. \textcite{Wei/2011} do not consider resource release times, so the resource allocation variables $r_v$ do not have a time dimension. The model is
\begin{align*}
    (M_{wei}) \quad \text{min.} \quad & \sum_{v \in V} w_v y_v  \tag{W.1} \label{wei:model:c1}                              \\
    \text{s.t.} \quad                 & a_s = 0,  \tag{W.2} \label{wei:model:c2}                                             \\
                                      & a_v \leq  a_u + t_{uv} + \Delta r_u, &  & uv \in A \tag{W.3}, \label{wei:c3}          \\
                                      & y_v \geq 1 - a_v / H,                &  & v \in V \tag{W.4}, \label{wei:c4}           \\
                                      & \sum_{v \in V} r_{v} \le k,  \tag{W.5} \label{wei:c5}                                \\
                                      & r_v = 0,                             &  & v \in V, F_v > \bar F, \tag{W.6} \label{wei:c6} \\
                                      & r_v, y_v \in \{0,1\},                &  & v \in V.
\end{align*}

The objective function minimizes the total value lost in the vertices that burn. Constraints~\ref{wei:model:c2} and~\ref{wei:c3} are standard fire propagation constraints. Constraint~\ref{wei:c4} sets variable $y_v$ to $1$ if the vertex $v$ burns before the time horizon $H$, and to $0$ otherwise. Constraint~\ref{wei:c5} ensures that the total number of resources allocated to the vertices $v$ is not greater than $k$. \textcite{Wei/2011} are the first to consider firefighter safety constraints. They do this by fixing the value of $r_v$ to $0$ if the predicted flame length $F_v$ at vertex $v$ is greater than the threshold $\bar F$, as shown in constraint~\ref{wei:c6}.

\section{Random Search Algorithm} \label{app:new_instances:random_search}
Algorithm~\ref{alg:random_search} presents a pseudocode for the Random Search algorithm. The idea is to build an allocation $\A$ incrementally from the first decision point to the last. At each decision point $t \in \{t_1, \dots, t_T\}$, we build a set $C$ of candidate vertices that are unprotected and not burned yet. We then add $|R_t|$ randomly selected vertices from $C$ to allocation $\A$. If the new allocation $\A$ is better than the current best allocation $\A^{*}$, we update $\A^{*}$. The algorithm stops when the stopping criterion is met.

\begin{algorithm}[H]
    \SetAlgoLined
    \DontPrintSemicolon
    \caption{\algname{RandomSearch}} \label{alg:random_search}
    $\A^{*} \gets \A_0$\;
    \While{not stop}{
        $\A \gets \A_0$\;
        \For{$t \in \{t_1, \dots, t_{T}\}$}{
          $C \gets V \setminus (B^{\A}_{t} \cup P^{\A})$\;
          \For{$i \in R_t$}{
            $v \gets \text{Pick a random element from } \: C$\;
            $\A_i \gets v$\;
            $C \gets C \setminus \{v\}$\;
          }
        }
        \If{$|B^{\A}_H| < |B^{\A^{*}}_H|$}{
          $\A^{*} \gets \A$\;
        }
    }
    \Return $\A^{*}$\;
  \end{algorithm}

  \printbibliography

\end{document}